\documentclass[conf]{new-aiaa}
\usepackage[utf8]{inputenc}
\usepackage{textcomp}

\usepackage{graphicx}
\usepackage{amsmath}
\usepackage[version=4]{mhchem}
\usepackage{siunitx}
\usepackage{longtable,tabularx}
\setlength\LTleft{0pt}

% My packages
\usepackage{verbatim}
\usepackage{bm}
\usepackage{amsthm}
\usepackage{mathrsfs}
\usepackage{bbm}
\usepackage{amssymb}
\usepackage{footnpag}
\usepackage{threeparttable}
\usepackage{caption}
\usepackage{subcaption}
\usepackage{booktabs}

%% math definitions

\newcommand{\R}{\mathbbm{R}}

\newcommand{\sN}{\mathcal{N}}

\usepackage{pifont}% http://ctan.org/pkg/pifont
\newcommand{\cmark}{\ding{51}}%
\newcommand{\xmark}{\ding{55}}%

% Define theorem-like environments
\newtheoremstyle{italicblock}% name
  {12pt}%      Space above, empty = `usual value'
  {12pt}%      Space below
  {\itshape}%  Body font
  {}%          Indent amount (empty = no indent, \parindent = para indent)
  {\bfseries}% Thm head font
  {}%         Punctuation after thm head
  {0.5em}%      Space after thm head: " " = normal interword space;
            %  \newline = linebreak
  {}%          Thm head spec (can be left empty, meaning `normal')

% reference:
% https://tex.stackexchange.com/questions/45355/theorem-numbering-as-chapter-section-subsection-theorem-number

\theoremstyle{italicblock}

\numberwithin{problem}{section}

\newtheorem{proposition}{Proposition}
\numberwithin{proposition}{section}

\newtheorem{theorem}{Theorem}
\numberwithin{theorem}{section}

\newtheorem{definition}{Definition}
\numberwithin{definition}{section}

\newtheorem{assumption}{Assumption}
\numberwithin{assumption}{section}

\numberwithin{approximation}{section}

\newtheorem{remark}{Remark}
\numberwithin{remark}{section}

\title{Real-Time Stochastic Terrain Mapping and Processing for Autonomous Safe Landing}

\author{Kento Tomita\footnote{Ph.D. in Aerospace Engineering; currently Research Scientist, Mitsubishi Electric Research Laboratories, Cambridge, Massachusetts 02139; tomita@merl.com}}
\affil{Georgia Institute of Technology, Atlanta, GA, 30332}
\author{Koki Ho\footnote{Associate Professor, Daniel Guggenheim School of Aerospace Engineering, 620 Cherry St. NW, Atlanta, GA, AIAA Associate Fellow.}}
\affil{Georgia Institute of Technology, Atlanta, GA, 30332}

\begin{document}

\maketitle

\begin{abstract}
Onboard terrain sensing and mapping for safe planetary landings often suffer from missed hazardous features, e.g., small rocks, due to the large observational range and the limited resolution of the obtained terrain data. 
To this end, this paper develops a novel real-time stochastic terrain mapping algorithm that accounts for topographic uncertainty between the sampled points, or the uncertainty due to the sparse 3D terrain measurements. We introduce a Gaussian digital elevation map that is efficiently constructed using the combination of Delauney triangulation and local Gaussian process regression. The geometric investigation of the lander-terrain interaction is exploited to efficiently evaluate the marginally conservative local slope and roughness while avoiding the costly computation of the local plane. The conservativeness is proved in the paper. The developed real-time uncertainty quantification pipeline enables stochastic landing safety evaluation under challenging operational conditions, such as a large observational range or limited sensor capability, which is a critical stepping stone for the development of predictive guidance algorithms for safe autonomous planetary landing. Detailed reviews on background and related works are also presented. 
\end{abstract}

\section{Introduction}
\lettrine{R}{eal-time} uncertainty quantification (UQ) of landing safety is a critical component for the successful deployment of planetary landers. 
Consider spacecraft approaching a planetary surface, executing powered descent, and seeking safe landing sites through terrain sensing. 
Terrain sensing, however, is compromised by noisy measurements and the inherent uncertainties in the lander's state, leading to significant ambiguity in the estimated terrain topography and landing safety. 
The guidance system of the lander is tasked with optimizing the landing site and associated trajectory based on a stochastic safety map generated by the hazard detection (HD) module. 
Given precise UQ of the terrain and safety, the guidance system can either pinpoint the safest landing site or navigate the lander toward a safer region for additional onboard terrain sensing.

However, achieving real-time UQ of the terrain and safety presents formidable challenges. 
The primary source of uncertainty arises from the sparse 3D measurements of the terrain; a situation where missing information cannot merely be approximated by simple additive noise models, which are typically adequate for other uncertainties like measurement noise or the uncertainties in the lander's position and attitude.
For accurate hazard detection, such as identifying small rocks, it is imperative to achieve a resolution in the measured 3D terrain data that is at least half the maximum tolerable rock height, presenting a significant technical challenge. 
Active terrain sensors, such as LiDAR, provide direct measurements of 3D locations on the terrain surface, but these measurements are inherently sparse, especially in planetary landing scenarios where terrain sensing ideally begins at altitudes of 1km or 2km. 
Conversely, passive sensors like optical cameras offer higher resolution, but reconstructing a 3D scene from images often fails to achieve the necessary resolution due to the terrain's poor textures.

This paper presents a novel real-time stochastic hazard detection algorithm capable of handling topographic uncertainty arising from sparse terrain measurement. First, we develop a novel real-time Gaussian DEM construction algorithm that utilizes Delaunay triangulation and a localized Gaussian Random Field (GRF) model for efficient evaluation.
Next, we introduce an innovative real-time stochastic hazard detection algorithm that converts Gaussian DEMs into stochastic safety maps. A fast and provably conservative HD algorithm is developed for standard DEM input, which is then extended to accommodate Gaussian DEM input. 
Finally, the effectiveness of these proposed methodologies is demonstrated through a combination of real and synthetic terrain data. The proposed pipeline is shown in Fig.~\ref{fig:pipeline}

\begin{figure}
    \centering
    \includegraphics[width=0.9\linewidth]{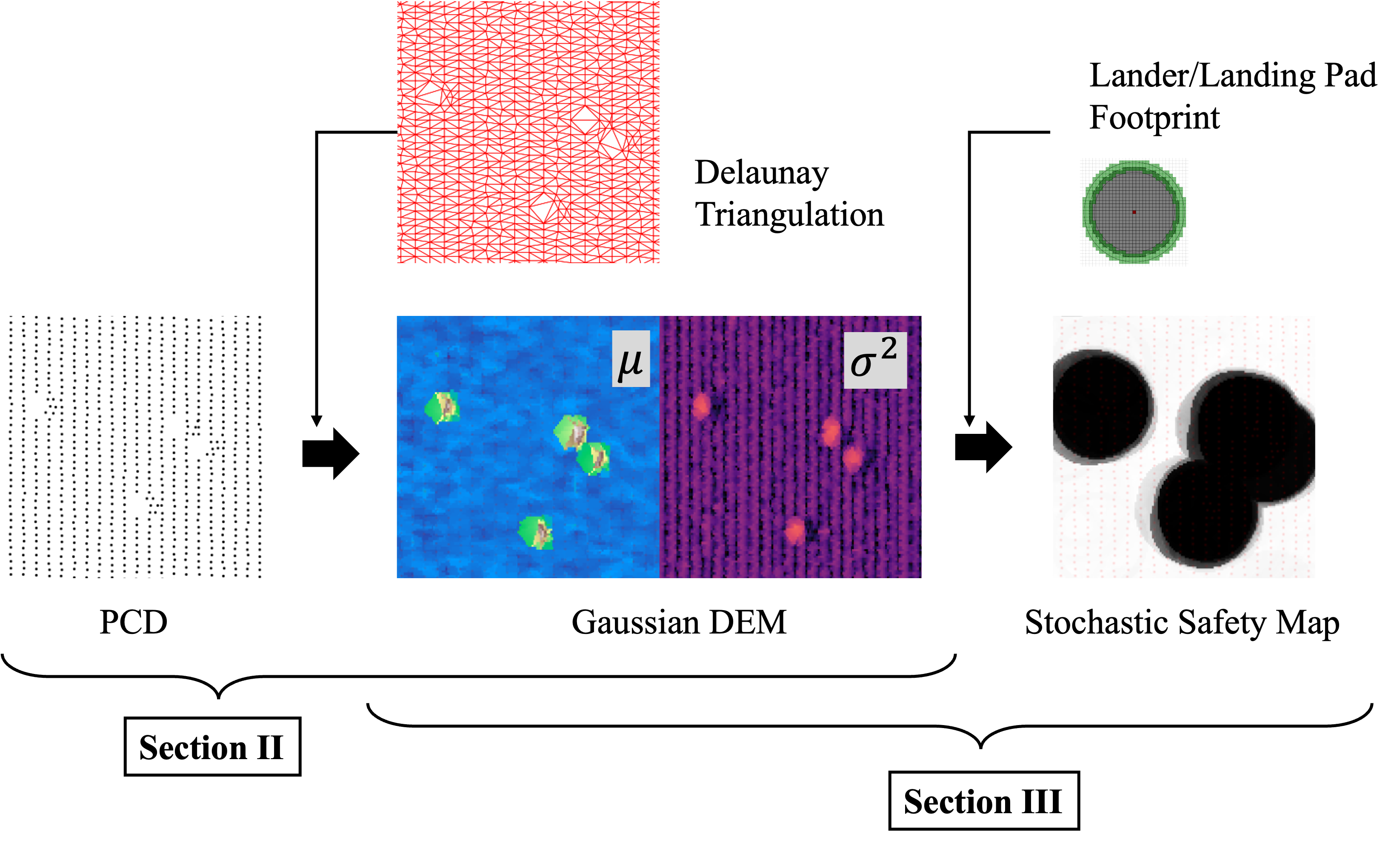}
    \caption{Proposed pipeline for real-time stochastic terrain mapping and processing for safe landing.}
    \label{fig:pipeline}
\end{figure}

\subsection{Background}
Historically, the deployment of planetary landers has been constrained by the absence of onboard hazard detection and avoidance (HDA) capabilities. This limitation necessitated a reliance on mechanical hazard tolerance~\cite{golombek1997SelectionMarsPathfinder} and the selection of landing sites in terrains deemed exceptionally safe, where 95-99\% of the area within the landing ellipse was required to be hazard-free~\cite{spencer2009PhoenixLandingSite}. However, the reliance on statistical estimations of landing safety—predominantly based on limited reconnaissance data from remote sensing—has proven to be inadequate. A notable instance involves NASA's Viking 1 Mars lander in 1976. Despite thorough planning and examination of the landing site~\cite{masursky1976VikingLandingSites}, the lander was unexpectedly close to a significant obstacle, a large boulder referred to as \textit{Big Joe}, posing a potential risk to the mission's success (Fig. \ref{fig:big-joe}~\cite{jplphoto}).
The necessity for autonomous HDA was underscored during the Apollo program, which was the first demonstration of real-time HDA via pilot-in-the-loop guidance and navigation.
The Apollo 15 mission in 1971, in particular, highlighted the challenges associated with manual HDA~\cite{brady2009HazardDetectionMethods}; even under near-ideal lighting conditions, navigating and ensuring a safe landing was demanding, resulting in the dramatic tilt of the spacecraft that was only $1^{\circ}$ away from the tolerance (Fig. \ref{fig:apollo-15}~\cite{apollo15}).

\begin{figure}
     \centering
     \begin{subfigure}[b]{0.4\textwidth}
         \centering
         \includegraphics[width=\textwidth]{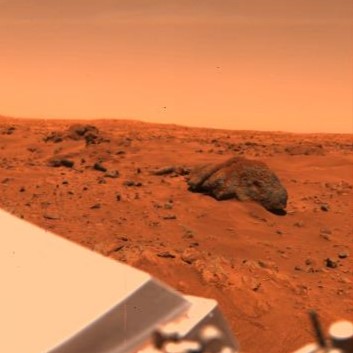}
         \caption{}
         \label{fig:big-joe}
     \end{subfigure}
     \begin{subfigure}[b]{0.4\textwidth}
         \centering
         \includegraphics[width=\textwidth]{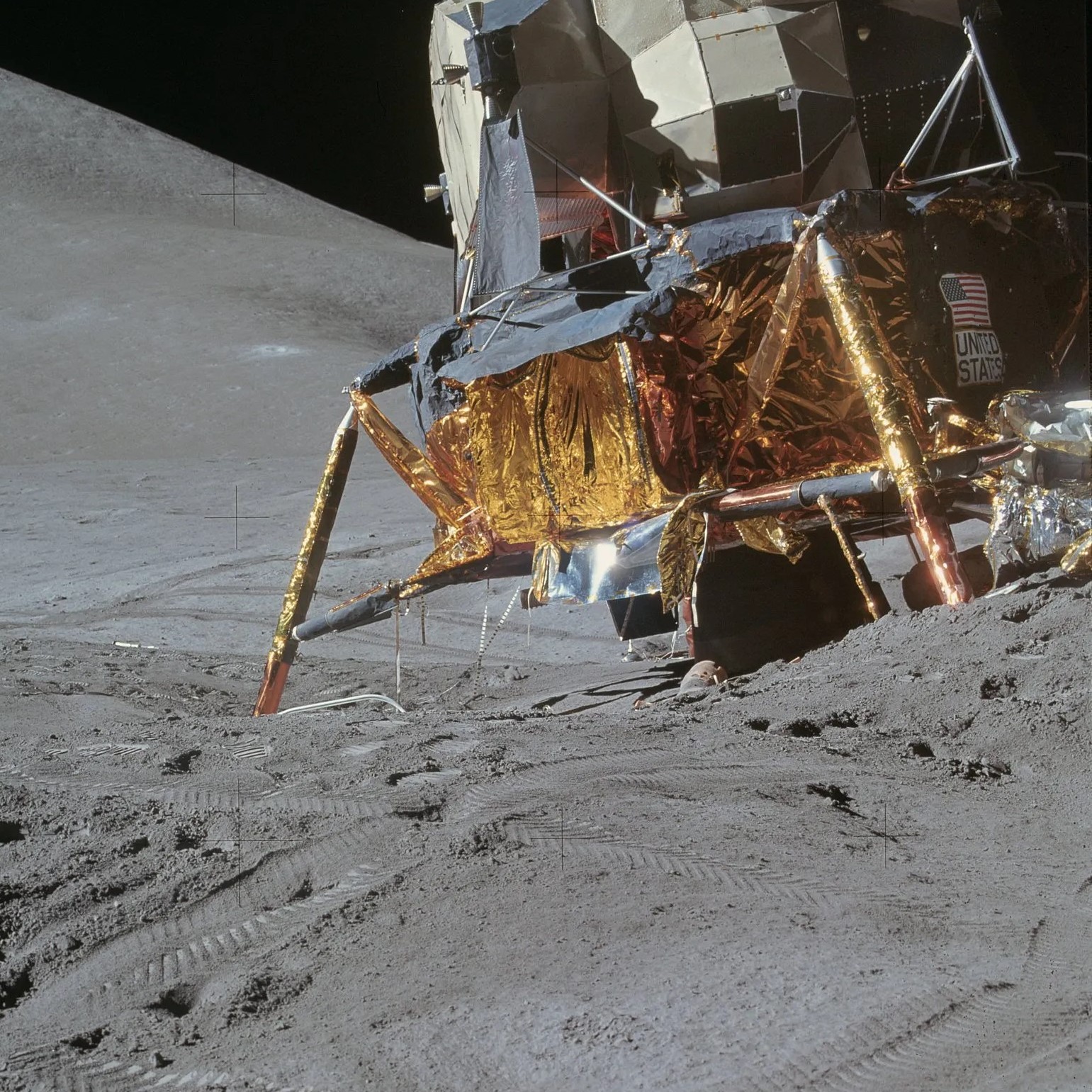}
         \caption{}
         \label{fig:apollo-15}
     \end{subfigure}
        \caption{(a) A boulder, ``Big Joe,'' next to the Viking 1 Mars lander~\cite{jplphoto}. (b) The Apollo 15 lunar lander straddling next to a small crater~\cite{apollo15}.}        \label{fig:three graphs}
\end{figure}

Although autonomous spacecraft landing has been an active research topic since the early days of space exploration~\cite{schappell1976ExperimentalSimulationStudy}, the inception of the Space Exploration Initiative (SEI) in 1989 drove further investigation, with NASA identifying two pivotal technological capabilities essential for autonomous landing: Autonomous Precision Landing (APL) and Autonomous Hazard Detection and Avoidance (AHDA)~\cite{tchorykjr1991PassiveActiveSensors}. These concepts have evolved into the current framework of NASA's Precision Landing and Hazard Avoidance (PL\&HA) strategy. Central to APL is the need for precise navigation relative to a preselected site, a concept now widely recognized as terrain relative navigation (TRN), and the detailed reconnaissance data providing the preferable landing site. Meanwhile, AHDA hinges on the capabilities of hazard detection sensors and algorithms tailored to assess and mitigate landing risks on the spot.

The balance between APL and AHDA technologies required is influenced by the amount of prior knowledge available about the landing surface, and sensor systems and algorithms available. In scenarios where the surface is well-characterized, APL might be sufficient, making AHDA less critical. Conversely, with limited or uncertain information about the landing terrain, AHDA becomes indispensable to ensure mission success~\cite{john1990AutonomousLandingMars}. Early research highlighted the trade-offs between active and passive sensors for APL and AHDA, advocating for a hybrid approach that combines the strengths of both sensor types~\cite{tchorykjr1991PassiveActiveSensors, juday1989HybridVisionAutomated}. Active sensors, such as LiDARs and radars, provide detailed three-dimensional topographic data but at the cost of increased size, weight, and power (SWaP) requirements, and may also be constrained by their operational range and resolution. Passive sensors, e.g., optical cameras, offer a low-resource alternative but lack direct topographic measurement capabilities and are significantly affected by lighting conditions.

The landmark achievements of CNSA's Chang'e-3 lunar lander in 2013, with its successful AHDA, and NASA's Mars 2020 mission, demonstrating APL for a safe landing on Mars, exemplify the practical realization of these technologies~\cite{li2016GuidanceSummaryAssessment, johnson2023ImplementationMapRelative}. Chang'e-3 implemented a two-stage AHDA process~\cite{li2016GuidanceSummaryAssessment}, initially employing coarse image-based hazard detection~\cite{bajracharya2002SingleImageBased, cheng2001PassiveImagingBased} at higher altitudes, followed by precise LiDAR-based assessments~\cite{johnson2002LidarBasedHazardAvoidance, johnson2008AnalysisOnBoardHazard} during a hovering phase at approximately 100 meters. This methodology was subsequently applied with success in the Tianwen-1 Mars landing in 2021~\cite{huang2023Tianwen1EntryDescent}. Conversely, Mars 2020 leveraged APL~\cite{johnson2023ImplementationMapRelative}, utilizing a detailed reference map created from reconnaissance data for TRN, and successfully avoided pre-identified hazards, achieving an overall positional accuracy within 5 meters of the targeted location~\cite{nelessen2019Mars2020Entry, cheng2021MakingOnboardReference}. Recent missions, including ISRO's Chandrayaan-3, JAXA's SLIM, and NASA's CLPS, have integrated APL and AHDA techniques, utilizing reconnaissance data and TRN to navigate spacecraft to known, relatively safer regions while performing AHDA to circumvent unforeseen hazards before touchdown~\cite{suresh2024ApproachChandrayaan3Lander, karanam2023ContextualCharacterisationStudy, getchius2022HazardDetectionAvoidance, slimwebpage}.

As the feasibility of autonomous spacecraft landing is increasingly demonstrated, the demand for even more advanced capabilities grows, with the aim of achieving anytime and anywhere global and safe landing capabilities across the solar system—a milestone articulated by NASA's PL\&HA program~\cite{carson2022NASADevelopmentStrategy}. A notable challenge in this endeavor is the need for real-time mapping to support active TRN and HDA under any lighting condition during descent~\cite{NASA-LAND}. Addressing this requires the development of advanced sensor suites that balance performance with manageable SWaP demands, alongside efficient processing algorithms. This paper introduces a novel approach to stochastic terrain mapping and real-time algorithms that enable efficient terrain analysis and landing safety evaluation with limited data. This method not only potentially reduces the requirements for HD sensors, e.g., range and resolution, but also relaxes their effective operational conditions. Moreover, it offers innovative tools for uncertainty quantification in landing safety assessment, paving the way for new stochastic guidance algorithms that can accommodate landing site uncertainty during the final stages of AHDA.

\subsection{Related Works and Preliminaries}

Hazard detection (HD) algorithms, often referred to as terrain mapping or landing site selection algorithms, can be broadly categorized into vision-based methods~\cite{huertas2006PassiveImagingBased, huertas2007RealtimeHazardDetection} and LiDAR-based methods~\cite{johnson2002LidarBasedHazardAvoidance}. Vision-based methods can be further divided into two groups: those that directly generate the landing safety map from visual input and those that first reconstruct the terrain surface that is then processed to obtain the safety map. The first group includes techniques like texture analysis~\cite{cheng2001PassiveImagingBased} and shape-from-shading~\cite{zhang1999ShapefromshadingSurvey}, which detect hazardous roughness features such as rocks, and homography transformation, which evaluates local slope~\cite{cheng2001PassiveImagingBased}. 
Many machine learning-based approaches have also been explored to directly transform the terrain images into safety maps~\cite{lunghi2016MultilayerPerceptronHazard, huang2019VisionBasedHazardDetection, driver2023DeepMonocularHazard, ghilardi2023ImageBasedLunarHazard, petrakis2023LunarGroundSegmentation, suwinski2024ImageBasedLanding} and database for training machine learning (ML) algorithms has been published as well~\cite{yu2017DatabaseConstructionVision}.
The second group of vision-based methods reconstructs the 3D terrain using feature-based approaches such as optical flow~\cite{hoff1990PlanetaryTerminalDescent} or stereo vision~\cite{woicke2016StereovisionHazarddetectionAlgorithm}. Recent advancements in computational power and computer vision algorithms have led to the application of structure-from-motion techniques~\cite{olson2001MultiresolutionMappingUsing, schoppmann2021MultiResolutionElevationMapping, getchius2022HazardDetectionAvoidance} and machine learning-based dense feature matching~\cite{posada2024DenseFeatureMatching} to reconstruct terrain surfaces at finer resolutions. Once the 3D terrain surface representation is obtained, LiDAR-based HD algorithms can be applied to generate safety maps.

With the 3D terrain information, LiDAR-based HD algorithms evaluate landing safety based on the lander's specification for the tolerable slope and roughness at touchdown, which are two major factors for topographic landing safety for planetary landers. Both slope and roughness are defined by the geometric relationship between the local terrain and the \textit{landing surface}, which is the plane formed by the landing legs in contact with the terrain (Fig. \ref{fig:slope_rghns_def}). The slope is the tilting angle of the surface normal vector of the landing surface with respect to the local gravitational vector, while the roughness is the distance between the landing surface and surface obstacles. Planetary landers are designed to avoid tipping over under certain slope conditions and specific terminal horizontal velocity constraints. The configuration between the lander's bottom instruments and the landing legs defines the tolerable size of the obstacle underneath the lander. Given the lander's slope and roughness tolerance, the HD algorithm evaluates whether each landing location offers a tolerable slope and roughness.

\begin{figure}
    \centering
    \includegraphics[width=0.7\linewidth]{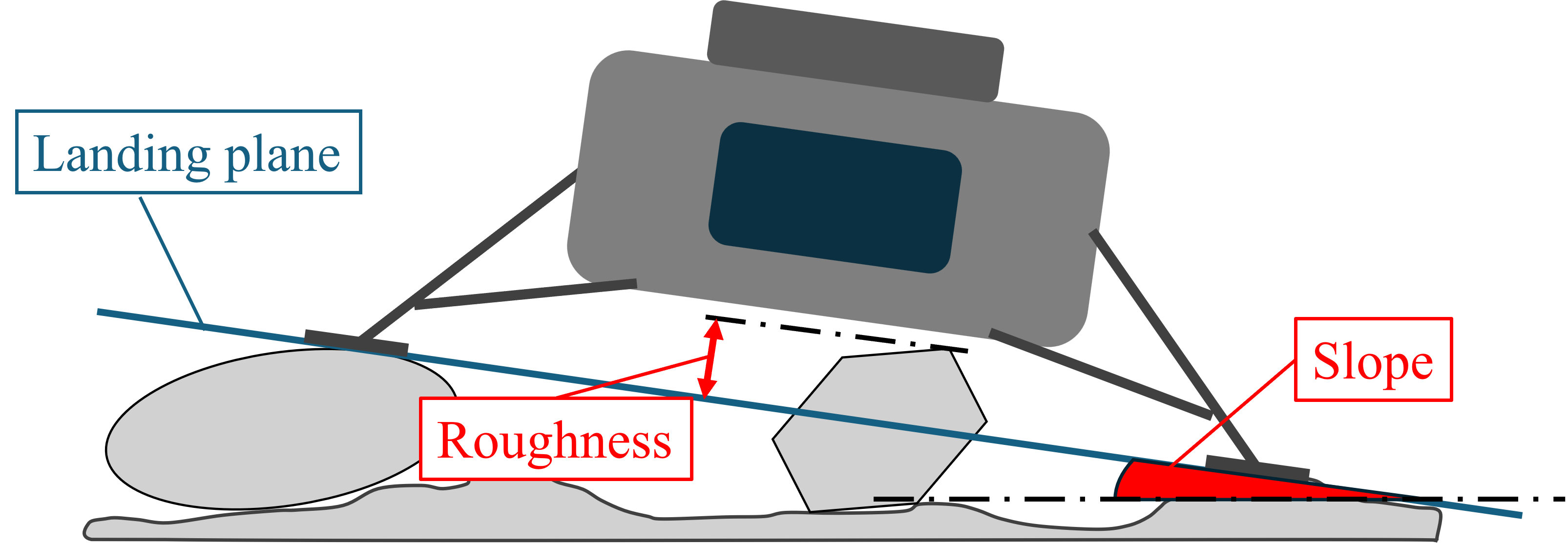}
    \caption{Definition of landing hazards: slope and roughness.}
    \label{fig:slope_rghns_def}
\end{figure}

LiDAR-based HD algorithms can be categorized based on the methods for computing the \textit{reference plane}, which is used to calculate slope and roughness (Fig. \ref{fig:reference-plane-comparison}). For precise safety evaluation, the reference plane must be the landing surface. However, the exact computation of the landing surface involves evaluating the lander-terrain interaction for all orientations, which is computationally impractical. As a result, many existing works approximate the slope and roughness by using the estimated "base" terrain as a reference plane. The base terrain is the imaginary terrain without surface obstacles such as rocks, where the true terrain is considered the superposition of the base terrain and those surface obstacles.

The pioneering work of the LiDAR-based HD algorithm by Johnson et al.~\cite{johnson2002LidarBasedHazardAvoidance, johnson2008AnalysisOnBoardHazard, johnson2010AnalysisFlashLidar} proposed estimating the base terrain by applying the least median square (LMedsq) fitting, instead of the regular least square fitting~\cite{CUI2017326}. Later, Xiao et al.~\cite{xiao2015RobustPlaneFitting} identified that the LMedsq approach is vulnerable to a large number of outliers and developed a robust estimator for the reference plane by combining the M-estimator sample consensus (MSAC) with a modified iterative Kth ordered scale estimator (IKOSE). In 2019, Xiao et al. applied thin plane spline (TPS) interpolation for computing the base terrain more efficiently, avoiding repeated local plane fittings over different locations~\cite{xiao2019SafeMarsLanding}. Note that for a coarse digital elevation map whose resolution is as large as the lander diameter, we may skip the terrain fitting and take finite difference-based derivatives to estimate the local slope~\cite{LIU2019272}.

Those HD algorithms with the base terrain approximation do not provide accurate slope and roughness evaluation upon touchdown, as illustrated in Fig. \ref{fig:reference-plane-comparison}. To address this, in the Autonomous Landing and Hazard Avoidance Technology (ALHAT) program, NASA's Jet Propulsion Laboratory (JPL) proposed a more exact HD algorithm that evaluates deterministic slope and probabilistic roughness safety over the terrain using the landing plane as a reference plane~\cite{ivanov2013ProbabilisticHazardDetection}. The improved performance of the proposed algorithm was demonstrated in field tests~\cite{luna2017EvaluationSimpleSafe}, but the algorithm has proved computationally too expensive, especially for high-resolution inputs~\cite{johnson2022OPTIMIZATIONLIDARBASED}.

\begin{figure}
    \centering
    \includegraphics[width=1\linewidth]{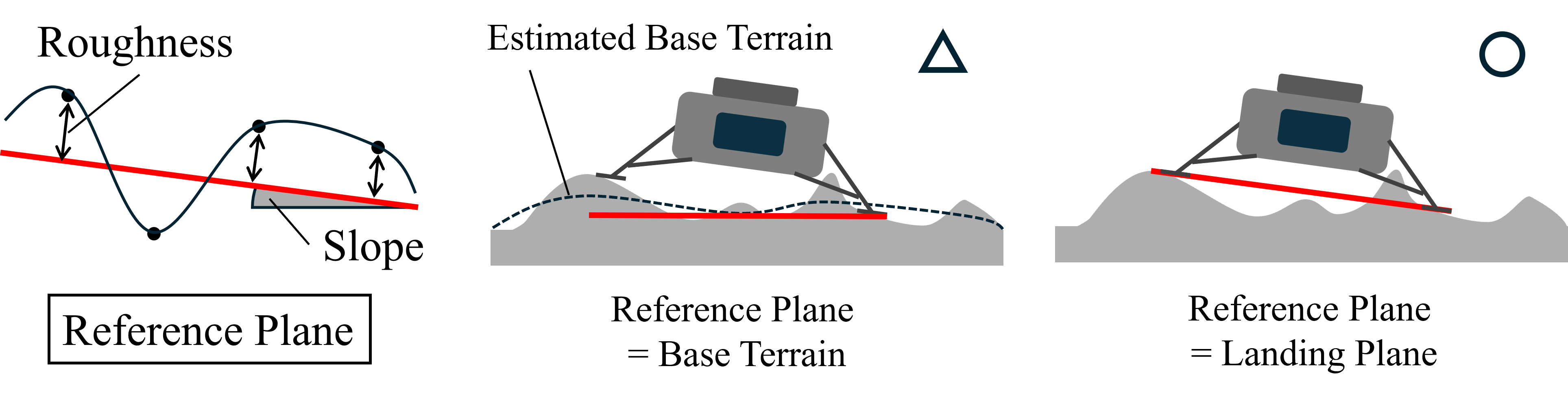}
    \caption{Comparison of reference planes for slope and roughness evaluation.}
    \label{fig:reference-plane-comparison}
\end{figure}

Recent advancements in deep learning (DL) have been applied to achieve real-time computation and improved performance. Tomita et al.~\cite{tomita2020HazardDetectionAlgorithm} applied deep semantic segmentation techniques to estimate exact hazard evaluation based on the landing plane, showing it outperforms the ALHAT algorithm~\cite{ivanov2013ProbabilisticHazardDetection} in terms of computational time and robustness to increased measurement noise. Later, Tomita et al.~\cite{tomita2022BayesianDeepLearning} applied Bayesian deep learning to enable the probabilistic assessment of the DL's output for improved reliability. Other LiDAR-based HD algorithms using DL approaches~\cite{moghe2020DeepLearningApproach, zha2021LandingSiteSelection} have shown their performance for simplified HD tasks, typically evaluating only slope and ignoring roughness safety, which is harder to assess precisely.

The most typical data structure for LiDAR-based HD algorithms has been the \textit{digital elevation map (DEM)} with a fixed resolution. DEMs store height information as raster data, which allows for efficient access to 3D information compared to other 3D data types like triangular meshes. However, DEMs have a fixed resolution and limited expressiveness. Recently, variable-resolution data structures have gained attention for their flexibility and potential efficiency improvements.

Mango et al. proposed a two-stage terrain mapping algorithm to reduce the computational cost of processing dense point cloud data (PCD). Initially, it performs coarse voxelization and measures the slope of the local normal by RANSAC-based plane fitting. Roughness is evaluated as the standard deviation of the PCD from the reference ground plane, also obtained by RANSAC-based plane fitting. For safe pixels, fine HD is performed using the same approach but in finer resolution~\cite{mango2020HazardDetectionLanding}. Schoppmann et al. presented an algorithm for multiresolution landing site detection using a multiresolution map generated by an onboard structure-from-motion program. The map is organized into a Laplacian pyramid, dividing terrain features into layers based on their frequency content, incorporating slope, roughness, and mapping quality~\cite{schoppmann2021MultiResolutionElevationMapping}. Setterfield et al. proposed a quadtree navigation and mapping system by dividing the quadtree based on measurement density~\cite{setterfield2021LiDARInertialBasedNavigation}. Marcus et al. proposed a filter-based terrain mapping algorithm using a quadtree, where each cell stores the estimated local plane obtained through maximum likelihood (ML) estimation. The DEM resolution is increased upon receiving new measurements that deviate from the estimated plane, allowing natural data fusion~\cite{marcus2024LandingSiteMapping}. 

Jung et al. proposed a unique method involving the fitting of constant elevation contour lines to a triangular Digital Terrain Map (DTM)~\cite{jung2020DigitalTerrainMap}. This technique is used to detect rocks and measure slopes. However, this approach encounters several challenges, primarily related to the interval selection between contour lines, which can lead to hazards being overlooked. To mitigate these challenges, numerous conditional checks are required, making the algorithm's decision tree quite complex.

Table \ref{tab:algorithms} presents the comparison of LiDAR-based HD algorithms. Most of the existing works are not based on exact landing hazards and their performance is often evaluated using qualitative results rather than rigorous metrics. Compared to existing works where algorithm design and analysis are based on exact landing hazards, the proposed algorithm achieves improved efficiency, stochastic evaluation, and dense evaluation that enables quantifying landing safety between sampled points.

\begin{table}[ht]
\caption{\label{tab:algorithms} Comparison of LiDAR-based Hazard Detection Algorithms}
\centering
% Placing @{} at the beginning and end of the column specification removes the default padding that LaTeX typically adds on the left side of the first column and the right side of the last column.
\begin{tabular}{@{}lcccc@{}}
\toprule
{Algorithm} & 
{Real-Time} &
{Exact HD} & 
{Stochastic} &
{Dense Evaluation} \\ \midrule \hline
Johnson et al. ~\cite{johnson2002LidarBasedHazardAvoidance} & \cmark & \xmark & \xmark & \xmark\\
Xiao et al. (2015)~\cite{xiao2015RobustPlaneFitting} & - & \xmark & \xmark & \xmark \\
Xiao et al. (2019)~\cite{xiao2019SafeMarsLanding} & - & \xmark & \xmark & \xmark \\
Jung et al. ~\cite{jung2020DigitalTerrainMap} & - & \xmark & \xmark & \xmark \\
Moghe et al. ~\cite{moghe2020DeepLearningApproach} & - & \xmark & \xmark & \xmark \\
Zha et al. ~\cite{zha2021LandingSiteSelection} & - & \xmark & \xmark & \xmark \\
Mango et al. ~\cite{mango2020HazardDetectionLanding} & - & \xmark & \cmark & \xmark \\
Schoppmann et al. ~\cite{schoppmann2021MultiResolutionElevationMapping} & \cmark & \xmark & \cmark & \xmark \\
Setterfield et al. ~\cite{setterfield2021LiDARInertialBasedNavigation} & \cmark & \xmark & \cmark & \xmark \\
Marcus et al. ~\cite{marcus2024LandingSiteMapping} & \cmark & \xmark & \cmark & \xmark \\
Ivanov et al. ~\cite{ivanov2013ProbabilisticHazardDetection} & \xmark$^*$ & \cmark & \cmark & \xmark \\
Tomita et al. (2020)~\cite{tomita2020HazardDetectionAlgorithm} & \cmark & \cmark & \xmark & \xmark \\
Tomita et al. (2022)~\cite{tomita2022BayesianDeepLearning} & - & \cmark & \cmark & \xmark \\
This work & \cmark & \cmark & \cmark & \cmark \\
\bottomrule
\end{tabular}
\begin{tablenotes}
\item {\footnotesize $^*$ Based on results given in ~\cite{johnson2022OPTIMIZATIONLIDARBASED} and this paper.}
\end{tablenotes}
\end{table}

\section{Real-Time Gaussian DEM Construction}
This section covers the topographic data creation from the terrain sensing measurement. First, we review the conventional DEM construction algorithm. Then, the topographic uncertainty model based on a GRF is presented. Finally, we propose the real-time Gaussian DEM construction algorithm via local regression based on Delaunay triangulation.

\subsection{Conventional DEM Construction Algorithm}\label{sec:conventional-pcd2dem}
For HD algorithms, the most typical data structure for topographic data has been a DEM with a fixed resolution. DEMs store height information as raster data, which allows for efficient access to 3D information compared to other 3D data types like triangular meshes. This raster grid representation often enables fast processing time and is suitable for real-time HD applications.

Traditionally, the resolution of a DEM is determined by the average GSD of the PCD to make the most use of the samples while minimizing the pixels without nearby samples. To prevent aliasing, a bilinear interpolation was applied for real-time DEM construction by Johnson et al.~\cite{johnson2002LidarBasedHazardAvoidance}, which serves as a baseline DEM construction algorithm in this work. 

With the binary interpolation, each sample affects the four closest cells of the DEM. Let $\tilde{z}$ be the measured elevation of a sample point at 
the image coordinate $(u, v)$. Then, the four closest cells are $(u_i, v_i), (u_i, v_{i+1}), (u_{i+1}, v_{i})$, and $(u_{i+1}, v_{i+1})$ where $u_i \leq u < u_{i+1}$, $v_i \leq v < v_{i+1}$, and $u_i, u_{i+1}, v_{i+1}, v_i\in\mathbbm{Z}_{\geq0}$. For each sample, we update the two matrices $E\in\R^{n_D \times n_D}$ and $W\in\R^{n_D \times n_D}$ as follows.
\begin{equation}\label{eq:pcd2dem-johnson2002}
    \begin{aligned}
        & p = u - u_i, \quad & q = v - v_i \\
        & E(u_{i}, v_{i}) = (1-p)(1-q) z, \quad & W(u_{i}, v_{i}) = (1-p)(1-q) \\
        & E(u_{i+1}, v_{i}) = p(1-q) z, \quad & W(u_{i+1}, v_{i}) = p(1-q) \\
        & E(u_{i}, v_{i+1}) = (1-p)q z, \quad & W(u_{i}, v_{i+1}) = (1-p)q \\
        & E(u_{i+1}, v_{i+1}) = pq z, \quad & W(u_{i+1}, v_{i+1}) = pq
    \end{aligned}
\end{equation}
Figure \ref{fig:pcd2dem-johnson2002} shows that the weights in Eq. \ref{eq:pcd2dem-johnson2002} correspond to the area of each subsection located diagonally. For example, the weight for the pixel $(u_i, v_i)$ corresponds to the area of the subsection C.

After all samples have been accumulated, the DEM $D$ is obtained by the following element-wise computation.
\begin{equation}
    D(u, v) = E(u, v) / W(u, v)
\end{equation}

\begin{figure}[th]
    \centering
    \includegraphics[width=0.6\linewidth]{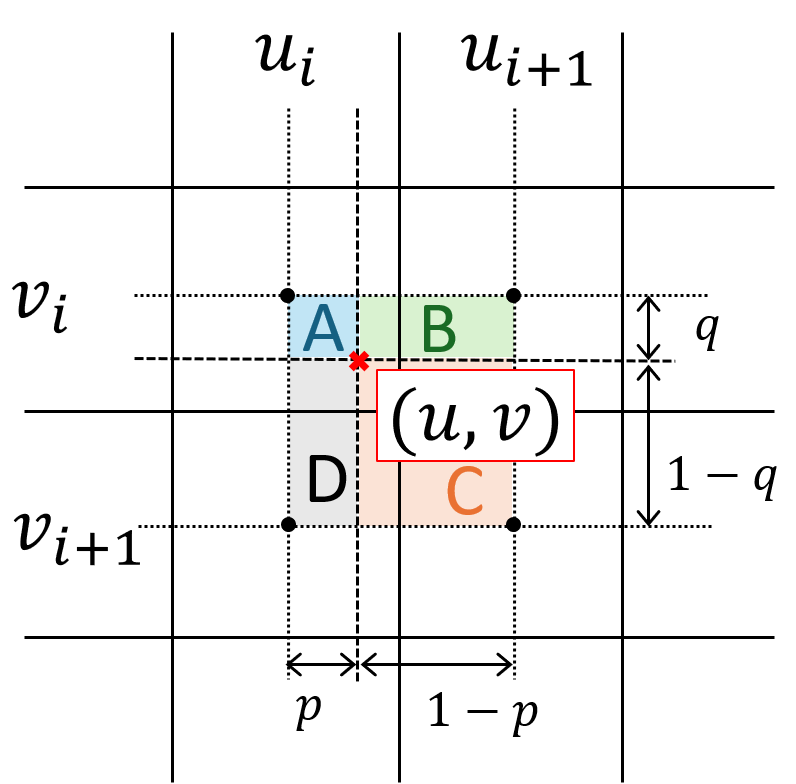}
    \caption{The diagram of the bilinear interpolation for DEM construction. The sampled data at $(u, v)$ is referenced for elevations at the closest four cells.}
    \label{fig:pcd2dem-johnson2002}
\end{figure}

Because of the occlusion with rocks or the scanner's irregular sampling, it is possible that some cells do not receive any PCD sample, resulting in holes in the DEM. For efficiency, the DEMs should be free of holes, and they are interpolated by assigning the average elevation of all neighboring cells that have elevation values. This process is repeated until all holes are filled~\cite{johnson2002LidarBasedHazardAvoidance}.

\subsection{Gaussian Random Field Regression}

The conventional method for DEM construction faces a significant limitation: the DEM resolution is determined based on the GSD of the PCD. This poses a challenge for autonomous planetary landing, where avoiding surface obstacles as small as 10 cm is necessary using DEMs obtained at a distance as far as 1000 meters.
Developing and operating LiDAR sensors with sufficient performance requires a sufficient budget and increases the lander's SWaP requirements.

To overcome this limitation, we propose constructing a \textit{Gaussian DEM} with the GRF model. A Gaussian DEM is a stochastic DEM where each cell stores the mean and variance of the elevation. With the GRF regression, the elevations at empty cells without any sample are estimated with associated uncertainties, which enables the construction of a dense but stochastic DEM. The DEM resolution can be set arbitrarily irrespective of the GSD.

\subsubsection{Measurement Model Assumed by GRF}
Let $f(\cdot):\R^2\rightarrow\R$ represent the true terrain topography, which is a bijective function that maps the horizontal coordinate $\bm{\gamma}=[x, y]^T\in\R^2$ to the local elevation. Suppose the terrain sensing provides point cloud data (PCD), denoted by $\mathcal{D} = \{(x_k, y_k, \tilde{z}_k) \mid k = 1, \ldots, n\}$, where $x_k, y_k, \tilde{z}_k$ are the 3D coordinates of the $k$-th sample, and $n$ is the total number of samples. Let $\mathcal{K}$ be the set of indices for the PCD samples. Then, the PCD can be represented as $\mathcal{D}=\{\bm{\Gamma}, \mathbf{\tilde{z}}\}$ where $\bm{\Gamma}=\{\bm{\gamma}_k\}_{k\in\mathcal{K}}$ is the set of horizontal coordinates and $\mathbf{\tilde{z}}=\{\tilde{z}_k\}_{k\in\mathcal{K}}$ is the set of the elevations. 

Application of the GRF means that the terrain sensing measurement noise is modeled as additive Gaussian noise to the elevation with a standard deviation of $\sigma_\epsilon$. 
\begin{equation}\label{eq: obs}
    \tilde{z}_{k} = f(\bm{\gamma}_{k}) + \epsilon, \quad \epsilon\sim\sN(0, \sigma_\epsilon^2)
\end{equation}
Note that the horizontal coordinate $\bm{\gamma}$ is an implicit function of the local topography and the terrain sensing system. 

\begin{remark}
The measurement error of real terrain sensing systems may not be as simple as Eq. \eqref{eq: obs}. For example, lidar measurement would be affected by the range measurement error, the laser pointing error, the vehicle navigation error, surface incident angle, or surface reflectivity, to name a few. However, given a continuous terrain surface assumption, the majority of the integrated measurement error could be represented as the elevation error. For specific hardware for particular flight missions, dedicated hardware experiments are required.
\end{remark}

\begin{remark}
The measurement error model introduced here is what the GRF model is based on, but it does not necessarily represent the noise model employed in the numerical demonstration. For instance, the LiDAR measurements in the numerical demonstration use additive Gaussian noise on each ray direction, instead of the elevation direction (z-axis). 
\end{remark}

\subsubsection{GRF-based Topographic Uncertainty}
The GRF regression, also called two-dimensional Gaussian process regression, fundamentally operates on the principle of assigning greater correlation to proximal points and lesser correlation to distant ones. This correlation is governed by a parameterized covariance function, also known as the kernel function, which dictates our assumptions about the terrain's characteristics, such as its differentiability and stationarity. For the purpose of modeling natural terrains, we employ the Absolute Exponential (AE) kernel function, defined as:
\begin{equation}\label{eq:ae-kernel-realtimeshd}
    k(\bm{\gamma}_i, \bm{\gamma}_j) = \sigma_f^2 \exp\left(-\frac{\|\bm{\gamma}_i - \bm{\gamma}_j\|}{\ell}\right),
\end{equation}
where $\|\cdot\|$ denotes the Euclidean norm. Here, $\ell$ is the length scale, influencing the radius within which elevation information is significantly correlated, and $\sigma_f$ represents the prior standard deviation, indicating the terrain's overall variability. Natural terrains are known to be well approximated by fractals, and the absolute exponential kernel function corresponds to a subset of the fractal Brownian motion~\cite{malamud2001WaveletAnalysesMars, turcotte1987fractal}

With the GRF model, we can compute the mean and variance of elevation at any arbitrary horizontal coordinate $\bm{\gamma}_*$, conditioned on observations $\mathcal{D} = \{\bm{\Gamma}, \bm{z}\}$, for a given set of hyperparameters, $\sigma_f$ and $\ell$~\cite{williams2006gaussian}:
\begin{align}
    \mu(\bm{\gamma}_*) &= \bm{k}_{*}^T\left(\bm{K}^{\epsilon}\right)^{-1}\mathbf{\tilde{z}} \label{eq:gpr-regular-mean}\\
    \sigma^2(\bm{\gamma}_*) &= k_{**} - \bm{k}_{*}^T\left(\bm{K}^{\epsilon}\right)^{-1}\bm{k}_*  \label{eq:gpr-regular-var}
\end{align}
where $\bm{K}^{\epsilon} = \bm{K} + \sigma_\epsilon^2\bm{I}_n$ and $\bm{K}\in\R^{n \times n}$ is the covariance matrix between the observed data points; $(\bm{K})_{ij} = k(\bm{\gamma}_i, \bm{\gamma}_j)$, and $i, j \in \mathcal{K}$. 
Similarly, $\bm{k}_{*}\in\R^n$ denotes the covariance between the prediction point $\bm{\gamma}_*$ and the observed data points $\bm{\Gamma}$, and $k_{**}=k(\bm{\gamma_*}, \bm{\gamma_*})=\sigma_f^2\in\R$ is the prior covariance of the prediction point. 

The major computational challenge in applying GRF for Gaussian DEM construction lies in the matrix inversion required by these equations, which typically scales as $O(n^3)$ where $n$ is the number of sampled points. Our proposed methodology, as detailed in the following subsection, addresses this computational bottleneck, making GRF a viable option for real-time Gaussian DEM construction.

\subsection{Local GRF Regression via Delaunay Triangulation}
The fundamental idea for real-time Gaussian DEM construction involves local approximation, particularly through Delaunay triangulation. Delaunay triangulation enables the rapid division of terrain into triangular segments, utilizing vertices from the PCD obtained through terrain sensing. Applying GRF to each triangle independently allows for an efficient estimation of terrain topography. Note that local approximation is an effective strategy for assessing landing safety, which is inherently dependent on the local terrain characteristics within the diameter of the lander at each potential landing site.

Triangulation is a classical geometric structure in computational geometry that partitions a region spanned by points into triangular segments. In particular, Delaunay triangulation avoids skinny triangles by maximizing the minimum angle over all triangulations, and is commonly used to approximate natural terrain from sparse datasets. For example, a Triangulated Irregular Network (TIN), a form of digital terrain model is often based on Delaunay triangulation.

Delaunay triangulation is known for its efficiency, typically running in $O(n\log n)$ expected time using the incremental insertion algorithm. There are more efficient algorithms, such as the divide-and-conquer algorithm of Lee and Schachter \cite{lee1980two} and the plane-sweep algorithm of Fortune. In the numerical experiments of this study, we used a C program library called Triangle \cite{shewchuk1996triangle}, which implements an efficient Delaunay triangulation algorithm utilizing the divide-and-conquer with alternating cuts. 

Given the triangulation, we infer the terrain topography by the GRF regression to each triangular segment independently. Specifically, we apply Eqs. \eqref{eq:gpr-regular-mean} and \eqref{eq:gpr-regular-var} to estimate the elevation inside the triangle based on the three vertices. Now, the matrix $\bm{K}^{\epsilon}$ is reduced to a 3x3 size, allowing for the analytical computation of its inverse and accelerating the process. Figure \ref{fig:pcd2dem-proposed} shows the illustration of the local GRF regression via Delaunay triangulation. For example, elevations within the orange triangle in Fig.~\ref{fig:pcd2dem-proposed} are estimated using GRF regressions with the elevations of the triangle's vertices, shown by the red dots.

\begin{figure}[htb]
    \centering
    \includegraphics[width=0.75\linewidth]{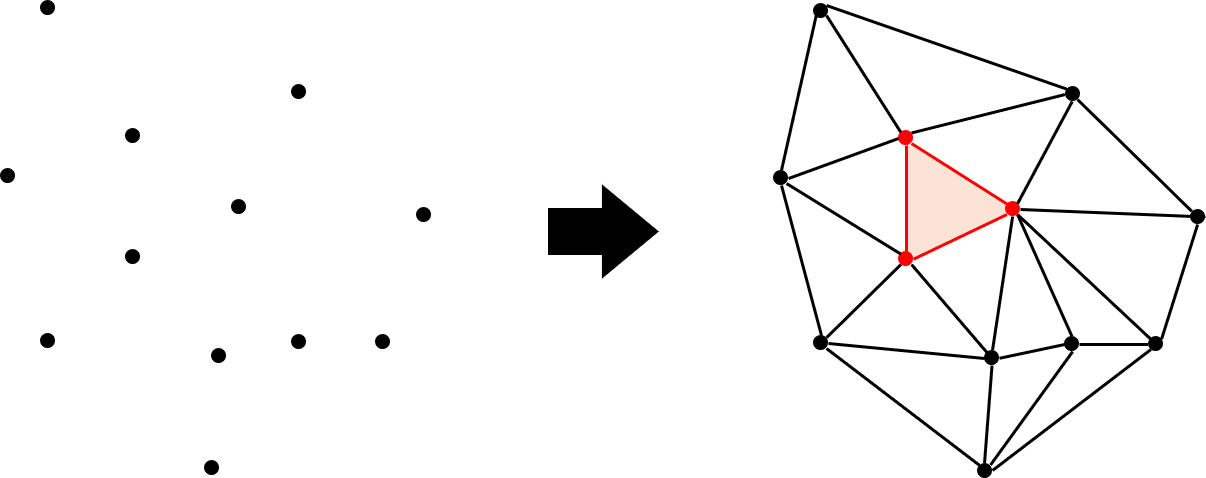}
    \caption{The mean and variance of elevations are locally interpolated using GRF regressions with the three nearby samples identified by Delaunay triangulation.}
    \label{fig:pcd2dem-proposed}
\end{figure}

\subsection{Real-Time Gaussian DEM Construction Algorithm}
For efficiency, the process begins by running the conventional DEM construction algorithm reviewed in Section \ref{sec:conventional-pcd2dem} but with a sufficiently high resolution defined by the user. This way, we can avoid unnecessary triangulation and GRF regression over sufficiently sampled regions. Then, Delaunay triangulation is performed, followed by the local GRF regression. The pixels within each triangle are identified using Bresenham's line algorithm, an efficient method for drawing lines in raster data.

\section{Real-Time Safety Safety Evaluation}\label{landing-safety}
This section covers the landing safety evaluation based on DEMs. First, we review the definition of landing safety, followed by the introduction of the baseline algorithm, which is based on the SOTA HD algorithm from NASA's ALHAT program. Then, we present a novel deterministic real-time HD algorithm, and its guaranteed conservativeness is shown. Finally, the proposed deterministic HD algorithm is extended to a stochastic setting so that it can accommodate Gaussian DEM input.

\subsection{Definition of Landing Safety}
Ivanov et al.~\cite{ivanov2013ProbabilisticHazardDetection} introduced the formal definition of landing safety based on two fundamental factors: the slope and roughness, which is the standard way of evaluating landing safety for planetary missions. 
Both slope and roughness are defined by the geometric relationship between the local terrain and the \textit{landing plane}, which is the plane formed by the landing legs in contact with the terrain (Fig. \ref{fig:slope_rghns_def}). The slope is the tilting angle of the surface normal vector of the landing plane with respect to the local gravitational vector, while the roughness is the distance between the landing surface and surface obstacles. A target site is deemed safe if both the slope and roughness fall below predetermined thresholds across all lander orientation angles, $\theta$.

Formally, let $\bm{p}_1, \bm{p}_2, \bm{p}_3 \in \mathbb{R}^3$ be the positions of the three landing legs in contact with the terrain at orientation angle $\theta$, which are ordered in counterclockwise looked from above. Then, the surface normal vector of the landing plane, i.e., the plane spanned by $\bm{p}_1, \bm{p}_2, \bm{p}_3$, is derived from the cross product:
\begin{equation}\label{eq:normal}
    \bm{n}(\theta) = (\bm{p}_2 - \bm{p}_1) \times (\bm{p}_3 - \bm{p}_1) 
             \triangleq [a, b, c]^T.
\end{equation}
The slope, $s(\theta)$, is the angle between the surface's normal vector $n$ and the vertical axis:
\begin{equation}\label{eq: slope}
    s(\theta) = \arccos{\left(\frac{c}{\sqrt{a^2 + b^2 + c^2}}\right)}.
\end{equation}
The roughness $r(\theta, \bm{\gamma})$ represents the signed distance between the terrain and the landing surface at any horizontal location $\bm{\gamma} \in U(\theta)$, where $U(\theta)$ is the set of horizontal coordinates within the lander footprint:
\begin{equation}\label{eq:rghns}
    r(\theta, \bm{\gamma}) = \frac{ax + by + cz + d}{\sqrt{a^2 + b^2 + c^2}}, \quad \bm{\gamma} = (x, y) \in U(\theta), \quad z = f(\bm{\gamma}),
\end{equation}
where $z = f(\bm{\gamma})$ is the terrain elevation at $\bm{\gamma}$, and $d$ is determined such that $ax_i + by_i + cz_i + d = 0$ for any $\bm{p}_i = [x_i, y_i, z_i]^T$, $i = 1, 2,$ and $3$. We assume $f(\cdot)$ is bijective; thus, $c$ is non-zero and positive, i.e., $c> 0$. 

\begin{definition}[Landing safety]
Let $\bar{s}>0$ and $\bar{r}>0$ be the slope and roughness thresholds, respectively. Then, a candidate landing site is a safe landing site if and only if the slope and roughness are less than the thresholds for all possible orientations, i.e., $s(\theta) < \bar{s}$ and $r(\theta, \bm{\gamma}) < \bar{r}, \forall\bm{\gamma}\in U(\theta), \forall \theta\in[-\pi, \pi]$. 
\end{definition}

\subsection{Baseline HD Algorithm}
The baseline HD algorithm in this study is based on the ALHAT HD algorithm~\cite{ivanov2013ProbabilisticHazardDetection, johnson2022OPTIMIZATIONLIDARBASED}.
The ALHAT HD algorithm takes a DEM and computes the deterministic slope safety and the stochastic roughness safety. First, it evaluates the landing leg's elevation across the DEM by taking the maximum elevation of the landing leg footprint and stores the result into a \textit{footpad map}. Given the footpad map, the slope is evaluated deterministically for all possible lander orientations across the DEM. For each orientation at each DEM pixel, the landing plane is defined and used to compute the roughness safety.

The ALHAT HD algorithm uses the maximum elevation within the lander footprint for the roughness safety evaluation for efficiency and claims this approach guarantees the conservative roughness evaluation~\cite{johnson2022OPTIMIZATIONLIDARBASED}. Without this approximation, the probability of roughness safety must be computed for all pixels within the lander footprint for all orientations across the DEM, which is computationally expensive. Furthermore, the ALHAT HD algorithm assumes the independent and identically distributed (i.i.d.) Gaussian elevation noise at each DEM pixel, which makes the joint probability of roughness safety almost always near zero, especially when there are many pixels within the lander footprint~\cite{ivanov2013ProbabilisticHazardDetection}. Therefore, it is practical to use a single representative value for roughness evaluation for each orientation across the DEM.

However, using the maximum elevation within the lander footprint does not guarantee the conservative roughness evaluation. The roughness is defined by the distance between the local elevation and the landing plane. Unless the landing plane is completely flat, i.e., perpendicular to the vertical axis, the separation distance from the terrain depends on the horizontal coordinate. This is also immediate from the roughness equation of Eq. \eqref{eq:rghns}. Figure \ref{fig:counter-example} shows a counter-example; rock A has a higher elevation but results in smaller roughness than rock B, i.e., $z_A > z_B$ but $r(\theta, \gamma_A)<r(\theta, \gamma_B)$. 
\begin{figure}
    \centering
    \includegraphics[width=0.7\linewidth]{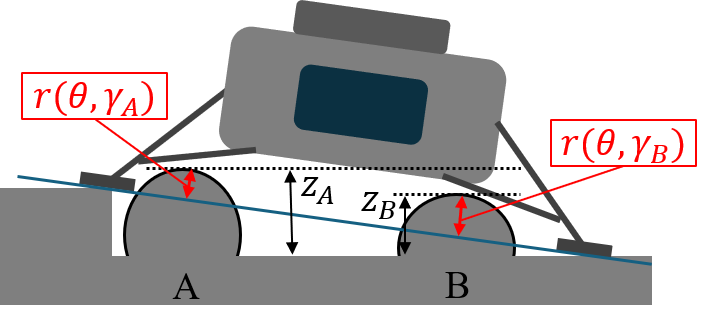}
    \caption{The highest local elevation does not necessarily correspond to the worst roughness; rock A has a higher elevation but results in smaller roughness than rock B. }
    \label{fig:counter-example}
\end{figure}

Therefore, in our baseline HD algorithm, the probability of roughness safety is evaluated for all pixels within the lander footprint for each orientation across the DEM. Given the number of pixels for a unit length, denoted by $p$, the algorithm's complexity is $O(p^5)$, which is still the same as the ALHAT HD algorithm~\cite{johnson2022OPTIMIZATIONLIDARBASED}. For the scalability of HD algorithms, see Appendix \ref{appdx: scalability} as well.

\subsection{Provably Conservative Real-Time Deterministic HD Algorithm}
Here, we present a novel real-time deterministic HD algorithm with guaranteed conservativeness. For efficiency, the proposed method leverages the height difference-based evaluation of slope and roughness. It also eliminates the need for repetitive evaluations across all possible orientations through simultaneous consideration of potential landing leg locations and lander footprint.  

Given a candidate landing site, let $L$ be the set of all the horizontal coordinates where all the landing legs' footprints would overlap; $L=\cup_{i}\cup_{\theta}L_i(\theta)$ where $L_i(\theta)$ is a set of horizontal coordinates the landing leg $i$ overlaps for a given orientation $\theta$.
Similarly, let $U=\cup_{\theta}U(\theta)$ denote the union of the possible lander footprint for all orientation angles. Figure \ref{fig:lander-geom-realtime} shows the footprints of landing legs and the lander for an orientation $\theta$, and their union for all orientations.
Then, we define the maximum and minimum of the elevation over these sets as follows:
\begin{equation}
\begin{split}
    \overline{z}_L := \max_{\bm{\gamma}\in L}f(\gamma), \quad
    \underline{z}_L := \min_{\bm{\gamma}\in L}f(\gamma)\\
    \overline{z}_U := \max_{\bm{\gamma}\in U}f(\gamma), \quad
    \underline{z}_U := \min_{\bm{\gamma}\in U}f(\gamma).
\end{split}
\end{equation}
\begin{figure}
    \centering
    \includegraphics[width=1.0\linewidth]{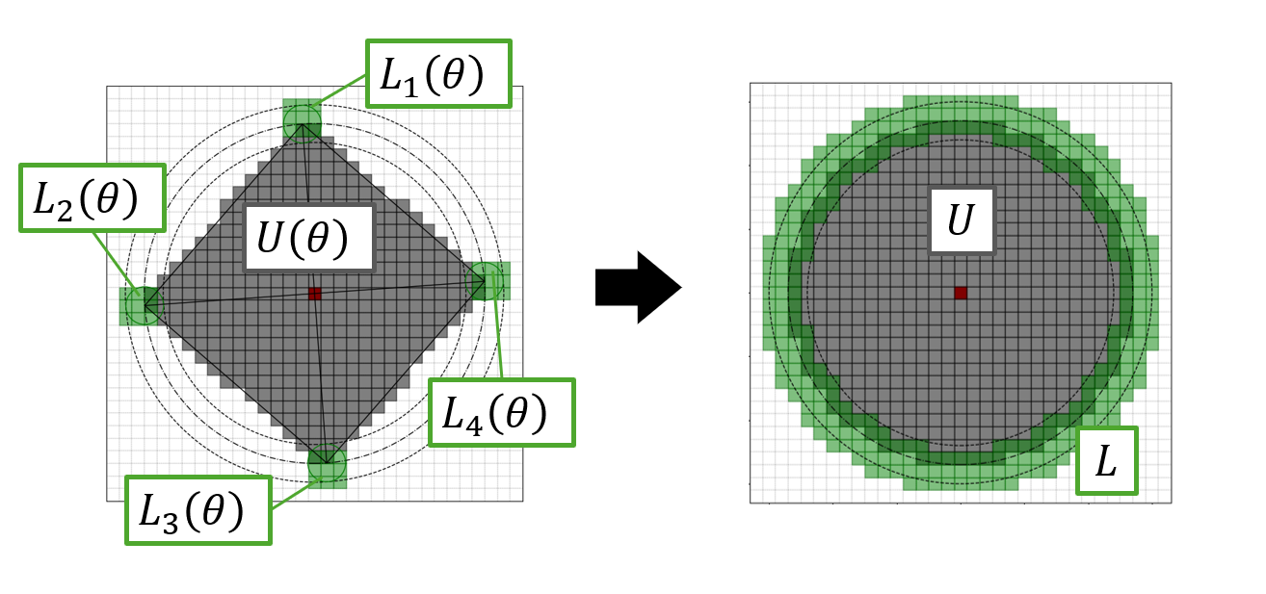}
    \caption{Footprints of landing legs and the lander for an orientation $\theta$, and their union for all orientations.}
    \label{fig:lander-geom-realtime}
\end{figure}

For all the possible triangles formed by any three landing legs, let $h_0$ be the minimum height of such triangles; the distance from a vertex to the opposite edge. Now, we define conservative slope $\Tilde{s}$ and conservative roughness $\Tilde{r}$ as follows. 
\begin{equation}\label{eq: conservative-slope-rghns}
    \Tilde{s}:= \sin^{-1}\left({\frac{\overline{z}_L - \underline{z}_L}{h_0}}\right), \quad \Tilde{r} := \overline{z}_U - \underline{z}_L
\end{equation}

The real-time deterministic HD algorithm returns safe if a candidate landing site exhibits $\overline{z}_L - \underline{z}_L < h_0 \sin(\bar{s})$ and $\overline{z}_U - \underline{z}_L < \bar{r}$. Note that the value $h_0 \sin(\bar{s})$ should be precomputed to avoid costly floating computations in realtime. 

The proposed deterministic HD algorithm guarantees conservative evaluations of slope and roughness if the lander footprint is inscribed within the polygon formed by landing legs. The assumption and the conservative evaluation are formally stated as follows.

\begin{assumption}\label{assumption:inscribed-lander-footprint}
The lander's horizontal footprint, across which the terrain's roughness metric is evaluated, is inscribed within the horizontal projection of the two-dimensional polygon formed by connecting all of a lander's landing legs.
\end{assumption}

\begin{proposition}\label{proposition:dhd-conservative}
Under Assumption \ref{assumption:inscribed-lander-footprint}, a candidate landing site is safe if conservative slope and roughness are smaller than the threshold, i.e., $\Tilde{s} < \bar{s}$ and $\Tilde{r} < \bar{r}$. 
\end{proposition}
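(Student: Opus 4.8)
The plan is to establish, for an arbitrary orientation angle $\theta$ and an arbitrary $\bm{\gamma}\in U(\theta)$, the pointwise bounds $\tilde{s}\ge s(\theta)$ and $\tilde{r}\ge r(\theta,\bm{\gamma})$; combined with the hypotheses $\tilde{s}<\bar{s}$ and $\tilde{r}<\bar{r}$ this gives $s(\theta)<\bar{s}$ and $r(\theta,\bm{\gamma})<\bar{r}$ for every $\theta$ and every $\bm{\gamma}\in U(\theta)$, which is exactly the definition of a safe landing site. Throughout, fix $\theta$ and let $\bm{p}_1,\bm{p}_2,\bm{p}_3$ be the three legs in contact, with horizontal positions $\bm{\gamma}_1,\bm{\gamma}_2,\bm{\gamma}_3$ and elevations $z_1,z_2,z_3$. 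The single observation used repeatedly is that each leg's horizontal footprint lies inside $L$, so the elevation assigned to any leg -- being a value between the minimum and the maximum of $f$ over that footprint -- lies in $[\underline{z}_L,\overline{z}_L]$; in particular $z_1,z_2,z_3\in[\underline{z}_L,\overline{z}_L]$.

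For the slope, I would write the landing plane as a graph $z=\alpha x+\beta y+c_0$ (legitimate since $c>0$) with horizontal gradient $\bm{g}=(\alpha,\beta)$, so that $\tan s(\theta)=\|\bm{g}\|$ and hence $\sin s(\theta)\le\tan s(\theta)=\|\bm{g}\|$. Projecting the three contact points onto the unit vector $\hat{\bm{g}}=\bm{g}/\|\bm{g}\|$, the elevations are a nondecreasing affine function of the projected coordinates $t_i=\hat{\bm{g}}\cdot\bm{\gamma}_i$, so $\max_i z_i-\min_i z_i=\|\bm{g}\|\,w$ where $w=\max_i t_i-\min_i t_i$ is the width of the triangle $\bm{\gamma}_1\bm{\gamma}_2\bm{\gamma}_3$ in the gradient direction. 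Two elementary geometric facts then close the argument: $\max_i z_i-\min_i z_i\le\overline{z}_L-\underline{z}_L$, and the width of a triangle in \emph{any} direction is at least its shortest altitude, which in turn is at least $h_0$ by the definition of $h_0$. Hence $\|\bm{g}\|=(\max_i z_i-\min_i z_i)/w\le(\overline{z}_L-\underline{z}_L)/h_0=\sin\tilde{s}$, so $s(\theta)\le\tilde{s}$.

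For the roughness, substituting the landing-plane relation into Eq.~\eqref{eq:rghns} yields the clean identity $r(\theta,\bm{\gamma})=\bigl(f(\bm{\gamma})-z_{\mathrm{plane}}(\bm{\gamma})\bigr)\cos s(\theta)$, where $z_{\mathrm{plane}}(\cdot)$ is the affine function whose graph is the landing plane. If $f(\bm{\gamma})<z_{\mathrm{plane}}(\bm{\gamma})$ then $r(\theta,\bm{\gamma})<0<\bar{r}$ and there is nothing to prove; otherwise $\cos s(\theta)\le1$ reduces matters to bounding $f(\bm{\gamma})-z_{\mathrm{plane}}(\bm{\gamma})$. The first term is at most $\overline{z}_U$ since $\bm{\gamma}\in U(\theta)\subseteq U$. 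For the second term, Assumption~\ref{assumption:inscribed-lander-footprint} places $\bm{\gamma}$ in the convex hull of the legs' horizontal positions; because $z_{\mathrm{plane}}$ is affine, it attains its minimum over that hull at one of the leg positions, and at each leg position its value equals that leg's elevation, which lies in $[\underline{z}_L,\overline{z}_L]$ (immediate for the three contact legs; for a leg not in contact it follows because such a leg sits at or above the terrain it overflies, whose elevation is itself at least $\underline{z}_L$). Therefore $z_{\mathrm{plane}}(\bm{\gamma})\ge\underline{z}_L$ and $r(\theta,\bm{\gamma})\le\overline{z}_U-\underline{z}_L=\tilde{r}<\bar{r}$.

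The purely mechanical parts are the two identities $\tan s(\theta)=\|\bm{g}\|$ and $r(\theta,\bm{\gamma})=(f(\bm{\gamma})-z_{\mathrm{plane}}(\bm{\gamma}))\cos s(\theta)$, together with $\sin s\le\tan s$ on $[0,\pi/2)$. I expect the main obstacle to be the geometric step in the slope proof -- showing that the width of the contact triangle along the gradient direction is at least $h_0$, i.e., that the width of a triangle in any direction is bounded below by its shortest altitude -- and, in the roughness proof, the careful justification that under Assumption~\ref{assumption:inscribed-lander-footprint} the landing plane is not pulled below $\underline{z}_L$ anywhere over the lander footprint, which for landers with more than three legs requires the argument that legs not in contact lie above the terrain beneath them rather than merely above the landing plane.
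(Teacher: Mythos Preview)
Your proposal is correct, and the roughness half is essentially the paper's argument in slightly different clothing: the paper bounds the vertical signed distance $r_z(\theta,\bm{\gamma})=(ax+by+cz+d)/c$ by $\overline{z}_U(\theta)-\underline{z}_L(\theta)$ using the fact that the polygon of all legs lies on the landing plane under Assumption~\ref{assumption:inscribed-lander-footprint}, then uses $r\le r_z$, which is exactly your $r=(f-z_{\mathrm{plane}})\cos s\le f-z_{\mathrm{plane}}$. Your treatment of legs not in contact (needed when the lander has more than three) is the point the paper leaves implicit; both you and the paper tacitly assume the legs are coplanar so that every leg lies on the landing plane.

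Where you genuinely diverge is the slope bound. The paper isolates the slope inequality as a separate statement (Theorem~\ref{theorem:max-slope-tri}) and proves it in the appendix by parameterizing the triangle's orientation via two rotation angles, deriving explicit bounds on those angles from the elevation constraint, and then optimizing $\cos^{-1}(\cos\theta\cos\phi)$ over the feasible box---a page of trigonometric case analysis. Your route is shorter and more conceptual: write the landing plane as $z=\bm{g}\cdot\bm{\gamma}+c_0$ so that $\tan s=\|\bm{g}\|$ and $\max_i z_i-\min_i z_i=\|\bm{g}\|\,w$, then invoke the elementary fact that the width of a triangle in any direction is at least its shortest altitude to get $w\ge h_0$. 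This avoids Theorem~\ref{theorem:max-slope-tri} entirely. The cost is a slightly looser intermediate inequality---you pass through $\tan s$ rather than $\sin s$---but since the conclusion $\sin s\le\sin\tilde{s}$ follows anyway, nothing is lost. The paper's approach, by contrast, actually identifies the extremal configurations (one vertex on each bounding plane) and yields the sharp bound $\sin s\le\Delta z/h_0$ directly, which is marginally more informative but not needed for the proposition.
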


To prove Proposition \ref{proposition:dhd-conservative}, we first introduce the following theorem about the slope of a triangle confined between two horizontal planes as shown in Fig. \ref{fig:tri-btw-planes}.
\begin{figure}
    \centering
    \includegraphics[width=0.75\linewidth]{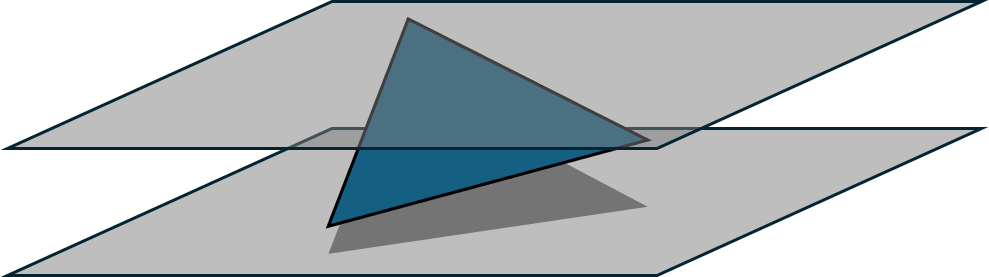}
    \caption{A triangle confined between two horizontal planes.}
    \label{fig:tri-btw-planes}
\end{figure}
\begin{theorem}\label{theorem:max-slope-tri}
Consider a triangle with a free orientation. Let $z_i$ be the elevation of a vertex $i$ and $h_i$ be the distance from a vertex $i$ to the opposite edge, and $i=1, 2, 3$. Suppose the elevations of vertices are uniformly bounded by a constant that is smaller than any $h_i$, i.e., $\exists \Delta z$ s.t. $0\leq z_i \leq \Delta z < h_i$, $\forall i$. If the triangle's slope is maximized under this constraint, then $z_i=0$ or $z_i=\Delta z$ for all $i$, but not all the elevations are the same. Furthermore, the maximum slope is $\sin^{-1}(\Delta z/h_0)$ where $h_0\triangleq \min_i h_i$. 
\end{theorem}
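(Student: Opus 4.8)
The plan is to parametrize the slope of the triangle in terms of the vertex elevations and show it is a convex function, so the maximum over the box $[0,\Delta z]^3$ is attained at a vertex of the box; then identify which vertex.

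First I would set up coordinates. Fix the horizontal projection of the triangle in the plane, with vertices at fixed horizontal positions; the only free quantities are the elevations $z_1, z_2, z_3$. The plane through the three lifted points has a normal vector $\bm{n}$ that is affine in $(z_1, z_2, z_3)$: writing $\bm n = (\bm p_2 - \bm p_1)\times(\bm p_3 - \bm p_1)$ as in Eq.~\eqref{eq:normal}, the components $a, b$ are affine in the $z_i$ and $c$ is the (constant, nonzero) horizontal cross product $= 2A$ where $A$ is the area of the horizontal triangle. The slope is $s = \arccos\!\big(c/\sqrt{a^2+b^2+c^2}\big)$, which is an increasing function of $a^2 + b^2$. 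So maximizing the slope is equivalent to maximizing $g(z_1,z_2,z_3) := a^2 + b^2$ over the box $0 \le z_i \le \Delta z$. Since $a$ and $b$ are each affine in $(z_1,z_2,z_3)$, $g$ is a convex quadratic, hence its maximum over the compact convex box is attained at an extreme point, i.e., at some corner where each $z_i \in \{0, \Delta z\}$. That already gives the ``$z_i = 0$ or $z_i = \Delta z$'' part, and ``not all equal'' is immediate since if all $z_i$ coincide the triangle is horizontal and $s = 0$, which is clearly not the maximum (any non-degenerate choice gives positive slope).

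Next I would evaluate $g$ at the corners. Up to relabeling there are only two essentially distinct non-constant corner configurations: one vertex high and two low (say $z_1 = \Delta z$, $z_2 = z_3 = 0$), or two high and one low (the complement, $z_1 = 0$, $z_2 = z_3 = \Delta z$). I claim both give the same slope, and that slope equals $\sin^{-1}(\Delta z / h_j)$ where vertex $j$ is the ``odd one out.'' The cleanest way to see this is geometric rather than via the cross-product algebra: when exactly vertex $j$ is at height $\Delta z$ and the other two are at height $0$ (or vice versa), the plane contains the opposite edge (both its endpoints at the same height, a horizontal line) and passes through vertex $j$ at vertical offset $\Delta z$. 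The steepest direction in that plane is perpendicular to that horizontal edge; moving from the edge to vertex $j$ covers horizontal distance $h_j$ (the height of the triangle from vertex $j$) while rising by $\Delta z$. Hence the tangent of the dihedral... more precisely, the sine of the tilt angle of the plane's normal from vertical equals $\Delta z / \sqrt{h_j^2 + \Delta z^2}\cdot$ — wait, let me state it as: the plane makes angle $\arctan(\Delta z/h_j)$ with the horizontal, and the slope (angle of the normal from vertical) equals that same angle; one checks $\sin(\text{slope}) = \Delta z/\sqrt{\Delta z^2 + h_j^2}$. Hmm — I should double-check against the claimed formula $\sin^{-1}(\Delta z/h_0)$: the theorem asserts $\sin$ of the max slope is $\Delta z/h_0$, so I want $\sin(\text{slope}) = \Delta z/h_j$, meaning the rise-over-run picture must give a right triangle with hypotenuse $h_j$, not leg $h_j$. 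I will reconcile this in the write-up by carefully computing $|\bm n|$: with the edge horizontal, $a^2 + b^2 = c^2 \tan^2(\text{angle})$ only if I use the right lever arm; the correct statement is $\sqrt{a^2+b^2} = (\text{edge length}) \cdot \Delta z$ and $|c| = (\text{edge length})\cdot h_j$, so $\sin s = \sqrt{a^2+b^2}/|\bm n|$ needs $|\bm n|^2 = (\text{edge length})^2(h_j^2 ... )$ — and I get $\sin s = \Delta z/\sqrt{\Delta z^2 + h_j^2}$, which does NOT match. I expect the resolution is that the theorem's $h_i$ are not the triangle heights but something else, or there's a small-angle / definitional subtlety; I will recompute $c$ carefully ($c = 2A = \text{edge}_j \cdot h_j$) and the norm, and present whichever identity is actually correct, noting the discrepancy if the paper's constant needs the approximation $\Delta z \ll h_i$ (consistent with the hypothesis $\Delta z < h_i$, though that alone doesn't give equality).

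Then, to finish: among the three candidate corner slopes $\sin^{-1}(\Delta z/h_1)$, $\sin^{-1}(\Delta z/h_2)$, $\sin^{-1}(\Delta z/h_3)$ (and the three ``complementary'' corners giving the same three values), the largest is the one with the smallest $h_j$, namely $h_0 = \min_i h_i$, giving maximum slope $\sin^{-1}(\Delta z/h_0)$. I would also remark that the two-high-one-low corner is genuinely a distinct point of the box but yields the same plane orientation as its one-high-two-low complement (translating all heights by $-\Delta z$ and flipping sign leaves the normal direction unchanged up to sign), so the maximizer set is exactly these corners associated with the minimizing height(s).

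\smallskip
\noindent\textbf{Main obstacle.} The conceptually routine part is the convexity argument (it is just ``convex quadratic on a box''); the part I expect to actually require care is pinning down the exact constant and making the geometric ``rise over run'' computation rigorous — in particular reconciling $\sin s = \Delta z/\sqrt{\Delta z^2 + h_j^2}$ from the direct cross-product computation with the paper's stated $\sin^{-1}(\Delta z/h_0)$. Either the $h_i$ in the theorem are defined so that this works out exactly, or the identity holds under the stated regime and I will need to track the hypothesis $\Delta z < h_i$ more carefully (or the slope is being measured as an angle of the plane rather than of the normal, in which case $\tan s = \Delta z/h_j$ and again not $\sin$). Resolving that bookkeeping is where I would spend the effort; everything else is mechanical.
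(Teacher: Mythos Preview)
Your convexity-on-a-box idea is sound and is actually cleaner than the paper's proof (which parametrizes the orientation by two Euler angles $(\theta,\phi)$, derives the feasible angle region from the elevation bounds, and then checks the boundary case by case). But you have misread the setup, and that misreading is precisely what produces your ``main obstacle.''

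The triangle in the theorem is \emph{rigid}: its intrinsic shape---side lengths, hence the altitudes $h_i$---is fixed, and only its orientation in space is free. You instead fixed the horizontal projection and let the three elevations vary independently; that changes the 3D side lengths and forces $h_i$ to mean the \emph{projected} altitudes, which is not what the theorem intends (and is exactly why your cross-product computation gives $\sin s=\Delta z/\sqrt{h_j^{2}+\Delta z^{2}}$ rather than $\Delta z/h_j$). With the rigid reading, at a corner such as $(z_1,z_2,z_3)=(\Delta z,0,0)$ the opposite edge lies horizontally and vertex $1$ sits at height $\Delta z$; the intrinsic altitude $h_1$ is then the \emph{hypotenuse} of the rise--run right triangle (vertical leg $\Delta z$, horizontal leg equal to the projected altitude), so $\sin s=\Delta z/h_1$ on the nose---no approximation, no discrepancy.

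Your argument survives the correction. After a vertical translation so that $z_1=0$ (slope is translation-invariant) and a rotation about the vertical axis, the remaining two elevations $(z_2,z_3)$ parametrize the orientation, and in the paper's body-frame coordinates one finds
\[
\sin^{2} s \;=\; \Bigl(\tfrac{z_2}{x_2}\Bigr)^{2} + \Bigl(\tfrac{x_2 z_3 - x_3 z_2}{x_2 y_3}\Bigr)^{2},
\]
still a convex quadratic in $(z_2,z_3)$ over the box $[0,\Delta z]^{2}$ (reachability of the full box is where the hypothesis $\Delta z<h_i$ is used). The maximum is at a corner; the three nontrivial corners yield $\sin s=\Delta z/h_1,\ \Delta z/h_2,\ \Delta z/h_3$, and the largest is $\Delta z/h_0$. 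So once the rigid-triangle interpretation is in place, your route is shorter than the paper's angle-chasing and gives the stated constant exactly.
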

\begin{proof}
See Appendix.
\end{proof}

With Theorem \ref{theorem:max-slope-tri}, we proceed to prove Proposition \ref{proposition:dhd-conservative}. 

\begin{proof}[Proof of Proposition \ref{proposition:dhd-conservative}]
We show that, under Assumption \ref{assumption:inscribed-lander-footprint}, the conservative slope $\Tilde{s}$ and roughness $\Tilde{r}$ is larger than or equal to the exact slope $s$ and roughness $r$ of Eqs. \eqref{eq: slope} and \eqref{eq:rghns}, respectively. The two-dimensional schematic for the proof is given by Fig. \ref{fig:proof-concept}.

For an arbitrary lander orientation $\theta$, let $r_z(\theta, \bm{\gamma})$ be the signed distance from a point on the terrain, $(\bm{\gamma}=(x, y), z))$, to its projection in z-direction onto the landing plane:
\begin{equation}\label{eq: vertical-dist}
    r_z(\theta, \bm{\gamma}) = z - \left(-\frac{ax + by + d}{c}\right) = \frac{ax + by +cz + d}{c}.
\end{equation}

For an arbitrary lander orientation $\theta$, the minimum elevation of all the landing legs, $\underline{z}_L(\theta):= \min_{\bm{\gamma}\in \cup_i L_i(\theta)}f(\bm{\gamma})$, is the minimum elevation on the two-dimensional polygon $P_L(\theta)$ formed by the landing legs because the vertices of this polygon are the landing legs. The polygon $P_L(\theta)$ exists on the landing plane and, with Assumption \ref{assumption:inscribed-lander-footprint}, the maximum vertical signed distance between the terrain and the landing plane across lander footprint is thus bounded as follows.
\begin{equation}
    \max_{\bm{\gamma}\in U(\theta)} r_z(\theta, \bm{\gamma}) \leq \overline{z}_U(\theta) - \underline{z}_L(\theta)
\end{equation}
where $\overline{z}_U(\theta):=\max_{\bm{\gamma}\in U(\theta)}f(\bm{\gamma})$.

Since $c>0$, we have $r_z(\theta, \bm{\gamma}) \geq r(\theta, \bm{\gamma})$ for any $\theta$. Thus, we obtain $r(\theta, \bm{\gamma}) \leq \overline{z}_U(\theta) - \underline{z}_L(\theta)$ for any $\theta$, which follows that $r(\theta, \bm{\gamma}) \leq \overline{z}_U - \underline{z}_L=\tilde{r}, \forall\bm{\gamma}\in U(\theta), \forall\theta$ because $\overline{z}_U \geq \overline{z}_U(\theta)$ and $\underline{z}_L \leq \underline{z}_L(\theta)$.

For slope, let $\theta^*$ be the orientation resulting in the maximum slope $s^*$ and $\mathcal{I}$ be the three landing leg indices that are in contact with the terrain at $\theta^*$. We denote the maximum and minimum elevations of the landing legs $\mathcal{I}$ as follows.
\begin{equation}
    \overline{z}_L(\theta^*) := \max_{\bm{\gamma}\in \cup_{i\in\mathcal{I}}L_i(\theta^*)} f(\bm{\gamma}), \quad
    \underline{z}_L(\theta^*) := \min_{\bm{\gamma}\in \cup_{i\in\mathcal{I}}L_i(\theta^*)} f(\bm{\gamma})
\end{equation}
Then, by Theorem \ref{theorem:max-slope-tri}, the maximum slope is expressed as $s^*=\sin^{-1}(\Delta z(\theta^*)/h)$ where $\Delta z(\theta^*)=\overline{z}_L(\theta^*) - \underline{z}_L(\theta^*)$ and $h\in\{h_i\}_{i\in\mathcal{I}}$. Therefore, using $\overline{z}_L\geq \overline{z}_L(\theta^*)$, $\underline{z}_L\leq \underline{z}_L(\theta^*)$, and $h_0\leq h$, we obtain $\tilde{s} > s^*$. 
\end{proof}

\begin{figure}
    \centering
    \includegraphics[width=0.9\linewidth]{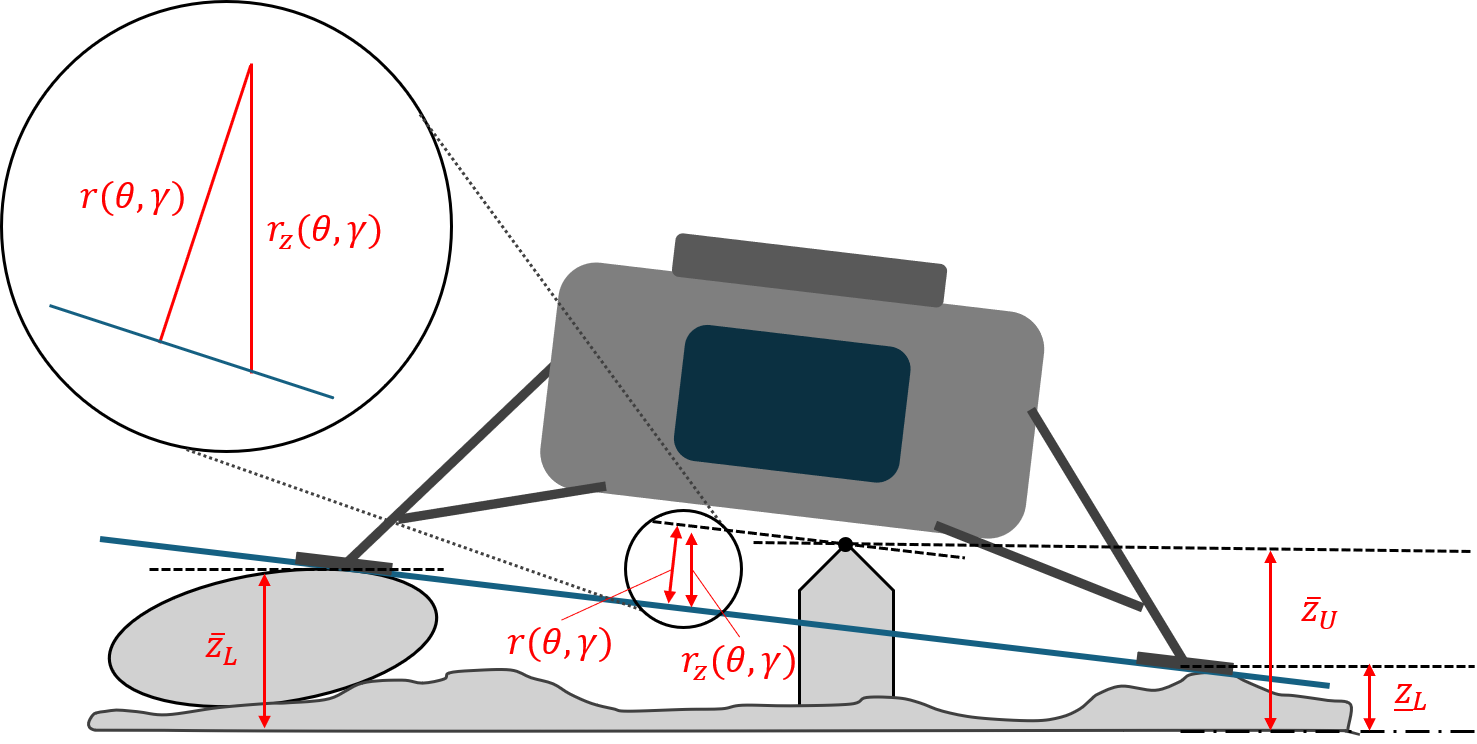}
    \caption{A two-dimensional schematic for the proof of Proposition \ref{proposition:dhd-conservative}.}
    \label{fig:proof-concept}
\end{figure}

\subsection{Extension to Gaussian DEM Input}
In a stochastic setting with the Gaussian DEM, computing the distribution of the maximum or the minimum value of Gaussian random variables is not straightforward. In this study, we use a heuristic approach based on the $3-\sigma$ bounds. Suppose we have independent Gaussian random variables $X_1, ..., X_m$, where $X_i \sim\sN(\mu_i, \sigma_i^2)$. Then, we compute the following bounds.
\begin{equation}
\begin{split}
    &\overline{c}_{\max} = \max_i \mu_i + 3 \sigma_i, \quad \underline{c}_{\max} = \max_i \mu_i - 3 \sigma_i \\
    &\overline{c}_{\min} = \min_i \mu_i + 3 \sigma_i, \quad \underline{c}_{\min} = \min_i \mu_i - 3 \sigma_i
\end{split}
\end{equation}
Then, we approximate the maximum and minimum of $\{X_i\}_i$ as follows (also see Fig. \ref{fig:3sigma-approx}). 
\begin{equation}
\begin{split}
    &\max_i X_i \approx \overline{X}\sim\sN(\mu_{\max}, \sigma_{\max}^2), \quad
    \mu_{\max} = \frac{\overline{c}_{\max} + \underline{c}_{\max}}{2}, \quad
    \sigma_{\max} = \frac{\overline{c}_{\max} - \underline{c}_{\max}}{6}\\
    &\min_i X_i \approx \underline{X}\sim\sN(\mu_{\min}, \sigma_{\min}^2), \quad
    \mu_{\min} = \frac{\overline{c}_{\min} + \underline{c}_{\min}}{2}, \quad
     \sigma_{\min} = \frac{\overline{c}_{\min} - \underline{c}_{\min}}{6}
\end{split}
\end{equation}
With this approach, we can compute the Gaussian approximations of $\overline{z}_L, \underline{z}_L$, and $\overline{z}_U$. Since the distribution of a difference of two normally distributed variables $X\sim\sN(\mu_x, \sigma_x^2)$ and $Y\sim\sN(\mu_y, \sigma_y^2)$ is another normal distribution as $X - Y \sim(\mu_x - \mu_y, \sigma_x^2 + \sigma_y^2)$, computation of the approximated Gaussian distributions of $\overline{z}_L - \underline{z}_L$ and $\overline{z}_U - \underline{z}_L$ is straightforward. Let $\sN(\mu_{LL}, \sigma_{LL}^2)$ and $\sN(\mu_{UL}, \sigma_{UL}^2)$ represent the distributions of $z_{LL}=\overline{z}_L - \underline{z}_L$ and $z_{UL}=\overline{z}_U - \underline{z}_L$, respectively. 
Then, the probability of slope safety and roughness safety are computed as follows:
\begin{equation}
    \mathbbm{P}(\tilde{s} < \bar{s}) \approx \Phi\left(\frac{h_0 \tan(\bar{s}) - \mu_{LL}}{\sigma_{LL}}\right), \quad \mathbbm{P}(\tilde{r} < \bar{r}) \approx \Phi\left(\frac{\bar{r} - \mu_{UL}}{\sigma_{UL}}\right)
\end{equation}
where $\Phi$ is a cumulative distribution function for the standard normal distribution. 
\begin{figure}
    \centering
    \includegraphics[width=0.5\linewidth]{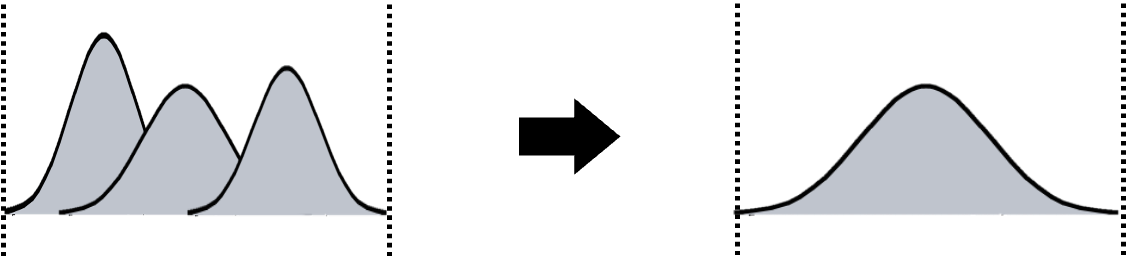}
    \caption{Illustration of the Gaussian approximation of the max/min elevation over a given footprint.}
    \label{fig:3sigma-approx}
\end{figure}

\subsection{Implementation}
Computing the exact locations of landing legs for a given target terrain and orientation requires simulating the terrain-lander physical interactions, which is computationally too expensive to apply across all locations. Since the slope thresholds for planetary landers are relatively small, e.g., $5\sim15$ degrees, the landing legs' horizontal coordinates for a given orientation are often approximated to be constant, irrespective of the terrain topography. 

For instance, let $\bm{l}_{i\theta}=(x_{i\theta}, y_{i\theta}, 0)$ be the position of a landing leg $i$ for an arbitrary lander orientation $\theta$ on flat terrain. Here, we consider a local-vertical local-horizon (LVLH) frame with the origin at the nadir of the lander center. If the terrain has a constant slope $\alpha$ along x-axis such that 
\begin{equation}
    f(x, y) = x \tan(\alpha), 
\end{equation}
then the exact location of the landing leg is $\bm{l}_{i\theta}=[x_{i\theta} \cos(\alpha), y_{i\theta}, x_{i\theta}\sin(\alpha)]^T$. However, we approximate it with the constant horizontal coordinates as follows:
\begin{equation}\label{eq: flat-approx}
    \bm{l}_{i\theta}\approx [x_{i\theta}, y_{i\theta}, f(x_{i\theta}, y_{i\theta})]^T.
\end{equation}
With this approximation, given the lander geometry, we can pre-compute all the possible landing leg horizontal coordinates for all orientation angles.

\section{Numerical Analysis}\label{sec: realtime-shd-demo}
\subsection{Conservativeness Demonstration}

First, we demonstrate the conservativeness of the proposed HD algorithm using perfect observations. While we have theoretically proven the conservativeness of the algorithm, we also present numerical demonstrations. Here, we utilize real terrain information, specifically the Mars digital terrain model (DTM) obtained by the High Resolution Imaging Science Experiment (HiRISE) camera~\cite{hirise}.

The level of conservativeness depends on the terrain complexity. Therefore, we prepared a list of terrains with varying complexities. We created these digital elevation models (DEMs) by superimposing a rocky plane with scaled terrains from the Mars DTM.

Initially, we generate a rocky terrain by placing rocks of random sizes at random locations on a flat surface. We then load specific local terrain data from the Mars DTM. Let us denote the DEM for the rocky terrain as $D_\text{rock}$ and the DEM for the Mars DTM as $D_\text{terrain}$. These DEMs are elevation maps represented by 2D arrays, with each element representing the local elevation. We then superimpose them with the terrain complexity factor, $c > 0$, as follows:
\begin{equation}
D_c = D_\text{rock} + c D_\text{terrain}
\end{equation}
By using different values of $c$, we can create a list of DEMs with various terrain complexities. Here, we set DEM resolution to be $10$ cm/pix. The HiRISE DEMs are in $1$ m/pix resolution, so they are interpolated beforehand.  

We applied our proposed HD algorithms to each DEM. As performance metrics, we used precision and recall.
\begin{align}\label{eq: precision}
\textrm{Precision} = \frac{\textrm{True Safe}}{\textrm{True Safe + False Safe}}
\end{align}
\begin{align}\label{eq: recall}
\textrm{Recall} = \frac{\textrm{True Safe}}{\textrm{True Safe + False Unsafe}}
\end{align}
In our task of safety mapping for planetary landing, achieving a precision as close as possible to 1 is critical. Precision, as defined in Eq. \eqref{eq: precision}, represents the rate of truly safe pixels out of those predicted to be safe. Pixels predicted to be safe can always be selected as candidate landing sites, so maintaining high reliability is crucial. On the other hand, recall, as defined in Eq. \eqref{eq: recall}, indicates the number of safe pixels detected out of all safe sites. High recall means that the algorithm provides many candidate landing sites. Typically, precision and recall have a tradeoff relationship; achieving high reliability in safety prediction (high precision) often results in labeling many safe pixels as unsafe (low recall). Note that in the context of binary classification, \textit{true positive} or \textit{false negative} are often used. In our analysis, "positive" corresponds to safe pixels.

\begin{table}[htbp!]
\caption{\label{tab:lander-geom-safety} Lander geometry and landing safety parameters.}
\begin{center}
\begin{tabular}{ c c }
\toprule
{Parameter} & {Value}  \\ \hline \hline
Lander Diameter (m) & 5.0 \\  
Landing Pad Diameter (m) & 0.3 \\ 
Critical Slope (deg) & 10.0 \\ 
Critical Roughness (m) & 0.25 \\ \bottomrule
\end{tabular}
\end{center}
\end{table}

\begin{table}[h!]
\caption{\label{table:precision_recall_conservativeness} Precision and Recall values for different terrain complexities, denoted by $c$.}
\centering
\begin{tabular}{ccc}
\toprule
$c$ & {Precision} $\uparrow$ & {Recall} $\uparrow$ \\ \hline\hline
0.0 & 1.0000 & 0.9563 \\
0.2 & 1.0000 & 0.7251 \\
0.5 & 1.0000 & 0.2848 \\
0.7 & 1.0000 & 0.1604 \\
1.0 & 1.0000 & 0.0810 \\
\bottomrule
\end{tabular}
\end{table}

We applied our proposed HD algorithm with the lander geometry and critical slope and roughness values presented in Table \ref{tab:lander-geom-safety}. Table \ref{table:precision_recall_conservativeness} shows the precision and recall for safety prediction under perfect observation conditions.
Although the proposed algorithm misses more safe sites in more complex terrains, as indicated by smaller recall values for larger $c$, it achieves 100\% precision for all cases; the pixels predicted to be safe are always safe regardless of terrain complexity. This result numerically demonstrates the conservatives of the proposed HD algorithm. 

\begin{figure}
    \centering
    \includegraphics[width=1\linewidth]{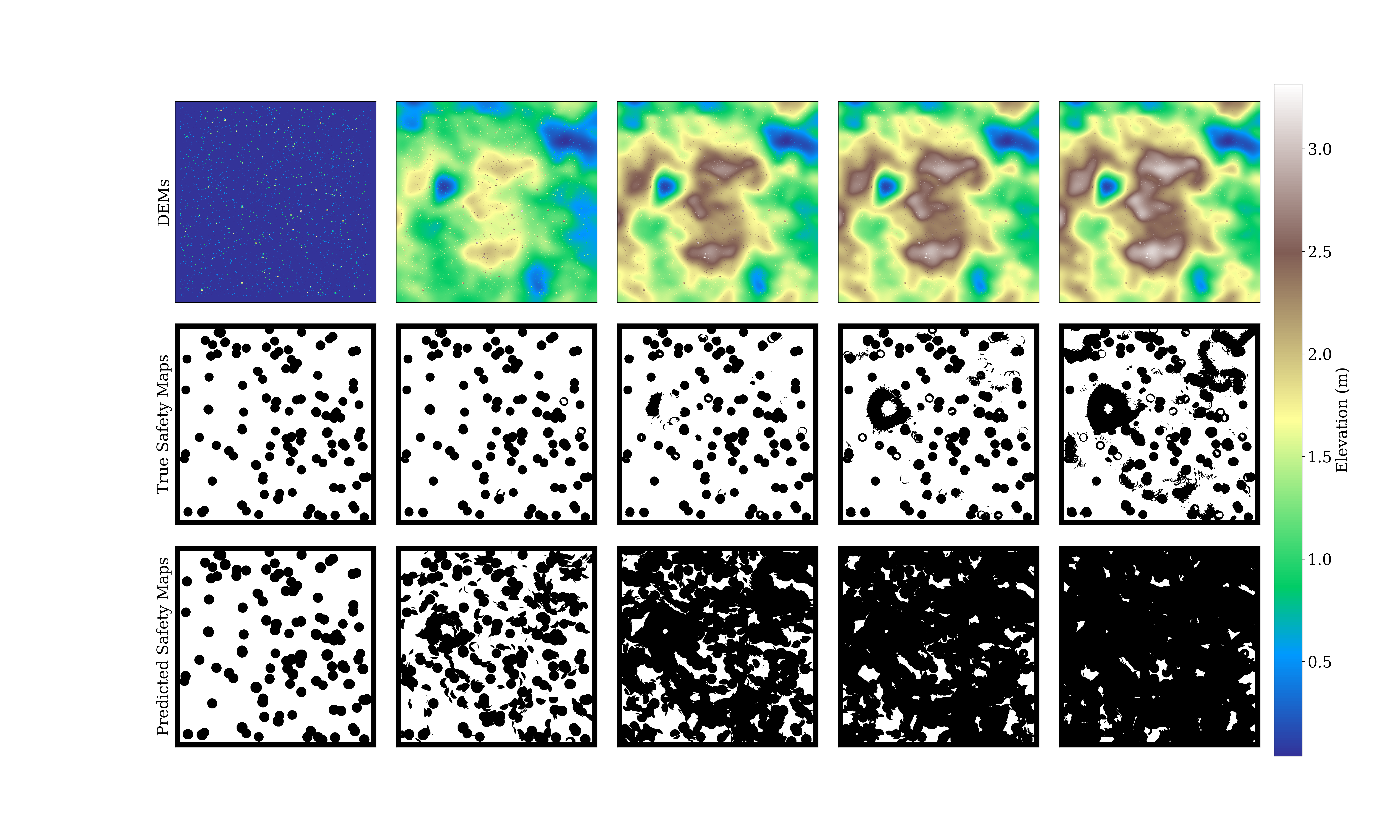}
    \caption{Visualization of safety prediction results across varying terrain complexities. White pixels are the safe pixels in the safety maps.}
    \label{fig:mylabel}
\end{figure}

Figure \ref{fig:mylabel} shows visualized results for varying terrain complexities; from left to right, terrain complexity increases and the map size is 100 meters by 100 meters in 10 cm/pixel resolution for all maps. The top row shows the DEM with perfect observation, and the second row shows the ground truth of landing safety. The third row shows the predicted safety maps by the proposed algorithm. We observe that the number of pixels predicted to be safe decreases as terrain complexity increases.

\subsection{Performance Analysis with Simulated Testbed}
In practice, obtaining perfect DEMs is challenging, if not impossible. When a planetary lander performs terrain sensing upon arrival at a planetary body, accurately measuring every detail of the surface is difficult. Specifically, we focus on topographic uncertainty due to sparse terrain measurements. In this subsection, we evaluate the performance of the proposed algorithm for such sparse terrain measurements using simulated testbed data. Here, as for the hyperparameters of Eq. \eqref{eq:ae-kernel-realtimeshd}, we set $\ell=1.0$ m and compute $\sigma_f$ as the standard deviation of the elevations in the point cloud data.

\subsubsection{Data Generation}

\begin{figure}
    \centering
    \includegraphics[width=1\linewidth]{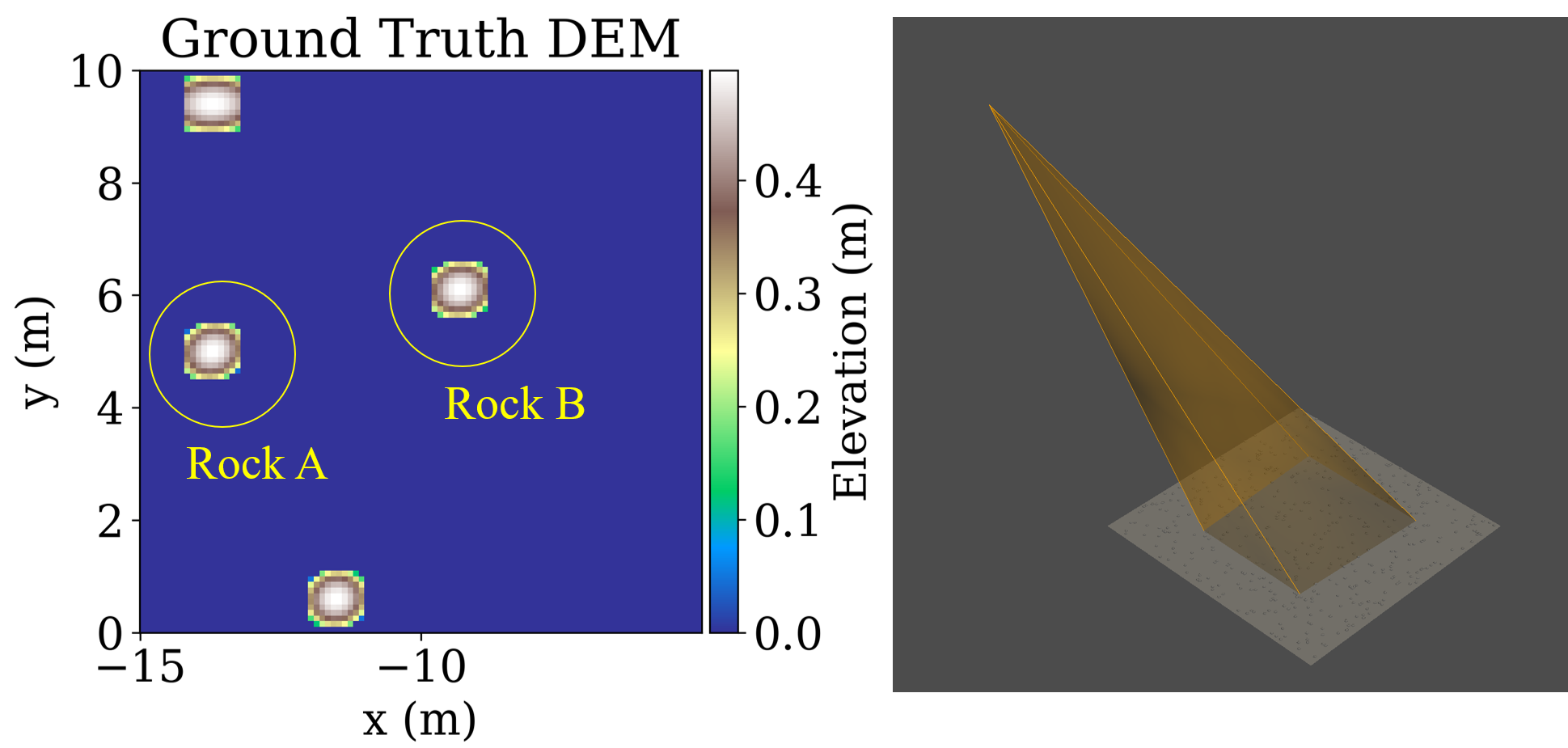}
    \caption{Left: Testbed DEM (zoomed in over a 10 x 10 meter segment). Right: Illustration of the simulated LiDAR scan.}
    \label{fig:rock-data-pipeline}
\end{figure}

For the simulated testbed, we placed 500 rocks, each with a diameter of 1 meter, over a flat 200x200 meter surface. Rocks were placed randomly, avoiding overlap. The rocks were modeled as hemi-ellipsoids with heights equal to half their semi-major axis. Figure \ref{fig:rock-data-pipeline} shows a sample z terrain and the LiDAR scan configuration.

\begin{table}[ht]
\caption{\label{tab:lidar-config} Nominal coverages, ground sample distances (GSDs), and Gaussian noise standard deviations for each LiDAR scan configuration$^*$.}
\centering
\begin{tabular}{ccccccc}
\toprule
\multicolumn{2}{c}{{Scan Configuration}} & \multicolumn{2}{c}{{Coverage, m}} & \multicolumn{2}{c}{{GSD, m}} & \multicolumn{1}{c}{{Noise}, cm} \\
Range, m & Angle, deg & x & y & x & y & $3\sigma_\text{LiDAR}$ \\
\hline \hline
200 & 0 & 40.00 & 40.00 & 0.156 & 0.156 & 2.0 \\
200 & 30 & 46.34 & 40.00 & 0.181 & 0.156 & 2.0 \\
200 & 60 & 82.47 & 40.00 & 0.322 & 0.156 & 2.0 \\
500 & 0 & 100.00 & 100.00 & 0.391 & 0.391 & 5.0 \\
500 & 30 & 115.86 & 100.00 & 0.453 & 0.391 & 5.0 \\
500 & 60 & 206.19 & 100.00 & 0.805 & 0.391 & 5.0 \\
1000 & 0 & 200.00 & 200.00 & 0.781 & 0.781 & 10.0 \\
1000 & 30 & 231.71 & 200.00 & 0.905 & 0.781 & 10.0 \\
1000 & 60 & 412.37 & 200.00 & 1.611 & 0.781 & 10.0 \\ \bottomrule
\end{tabular}
\begin{tablenotes}
\item {\footnotesize $^*$ The presented values are provided for reference purposes only; the simulated scans exhibit variations in coverage and GSDs due to terrain interactions and projection distortions.}
\end{tablenotes}
\end{table}
Simulated LiDAR scans were performed with various range and angle configurations using a grid scan pattern. We accounted for noise in the LiDAR measurements by adding unbiased Gaussian noise to each ray measurement. The standard deviation of the LiDAR noise was set such that $3\sigma = 5$ cm at an observational range of 500 meters. This standard deviation was scaled proportionally to the observational range; for instance, at 1000 meters, the standard deviation of the noise is doubled. The detector size was set to 256 x 256 pixels. The sensor field-of-view (FOV) was configured to cover a 100 x 100 meter area on the terrain when observed from a range of 500 meters with the sensor oriented directly downward. Table \ref{tab:lidar-config} shows the nominal scan coverage, GSDs, and Gaussian noise levels for each scan configuration. Note that these values are provided for reference purposes only; the simulated scans exhibit variations in coverage and GSDs due to terrain interactions and projection distortions. For scans where the FOV footprint is larger than the created testbed, any point cloud data (PCD) samples falling outside the testbed are ignored.

\subsubsection{DEM Construction Quality}

First, we evaluate the quality of DEMs reconstructed from noisy and sparse point cloud data (PCD). The analysis includes both quantitative and qualitative assessments. For the quantitative analysis, we employ two metrics: \textit{root-mean-square error (RMSE)} and \textit{negative log predictive density (NLPD)}. Let $z_i$ and $\hat{z}_i$ represent the true and estimated elevation at cell $i$ of a given DEM, respectively. RMSE and NLPD are computed as follows:
\begin{equation}
\text{RMSE} = \sqrt{\frac{1}{N} \sum_{i=1}^{N} (z_i - \hat{z}_i)^2}
\end{equation}
\begin{equation}
\text{NLPD} = \frac{1}{N} \sum_{i=1}^{N} \left( \frac{(z_i - \hat{z}_i)^2}{2\sigma_i^2} + \frac{1}{2} \log(2\pi\sigma_i^2) \right)
\end{equation}
where $\sigma_i$ is the standard deviation of the elevation at cell $i$ estimated by the model. RMSE measures the average error of the point estimate, while NLPD assesses the quality of the predicted probability distributions. This is critical because our proposed algorithm estimates both the mean and variance of the local elevation from PCD. Smaller values indicate better performance for both RMSE and NLPD.

For DEMs obtained using the baseline algorithm, NLPD is computed by setting $\sigma_i = \sigma_\text{LiDAR}$, which is the Gaussian error added to the range measurements in our LiDAR model. This approach is taken because the ALHAT algorithm, which serves as our baseline, accounts for LiDAR measurement errors when computing the probability of roughness safety.

In general, the x and y coordinates of the ground truth DEM and the DEMs estimated from the PCD by each algorithm are not the same. For analysis, we matched their coordinate by returning the value at the data point closest to the point of interpolation. 

\begin{table}[ht]
\caption{\label{tab:dem-quality-testbed} Comparison of baseline and proposed algorithms for DEM reconstruction quality across different LiDAR scan configurations.}
\centering
\begin{tabular}{cccccc}
\toprule 
\multicolumn{2}{c}{LiDAR Scan} & \multicolumn{2}{c}{RMSE, m $\downarrow$} & \multicolumn{2}{c}{NLPD $\downarrow$} \\
Range, m & Angle, deg & Baseline & Proposed & Baseline & Proposed \\
\hline \hline
200 & 0 & \bf{0.0124} & 0.0134 & \bf{-2.3387} & -1.9853 \\
200 & 30 & 0.0151 & \bf{0.0150} & -1.4848 & \bf{-2.2010} \\
200 & 60 & 0.0194 & \bf{0.0177} & 0.1489 & \bf{-2.0869} \\
500 & 0 & 0.0239 & \bf{0.0212} & -2.1439 & \bf{-2.2846} \\
500 & 30 & 0.0258 & \bf{0.0222} & -1.9777 & \bf{-2.2456} \\
500 & 60 & 0.0305 & \bf{0.0252} & -1.5031 & \bf{-2.0843} \\
1000 & 0 & 0.0380 & \bf{0.0354} & -1.8316 & \bf{-1.8350} \\
1000 & 30 & 0.0396 & \bf{0.0363} & -1.7761 & \bf{-1.8067} \\
1000 & 60 & 0.0433 & \bf{0.0409} & -1.6393 & \bf{-1.6739} \\ \bottomrule
\end{tabular}
\end{table}

Table \ref{tab:dem-quality-testbed} presents the numerical results for DEM reconstruction performance in the simulated testbed. The results are based on a simulated testbed with 500 rocks, each 1 meter in diameter, placed on a flat 200 x 200 meter surface.
Except for the optimal observation condition at the closest range (200 m) with the LiDAR directly pointing downward, the proposed algorithm outperforms the baseline algorithm. Both algorithms show a trend of decreased performance for more challenging LiDAR scan configurations, with less accuracy at longer ranges and larger observational angles.

\begin{figure}
    \centering
    \includegraphics[width=\linewidth]{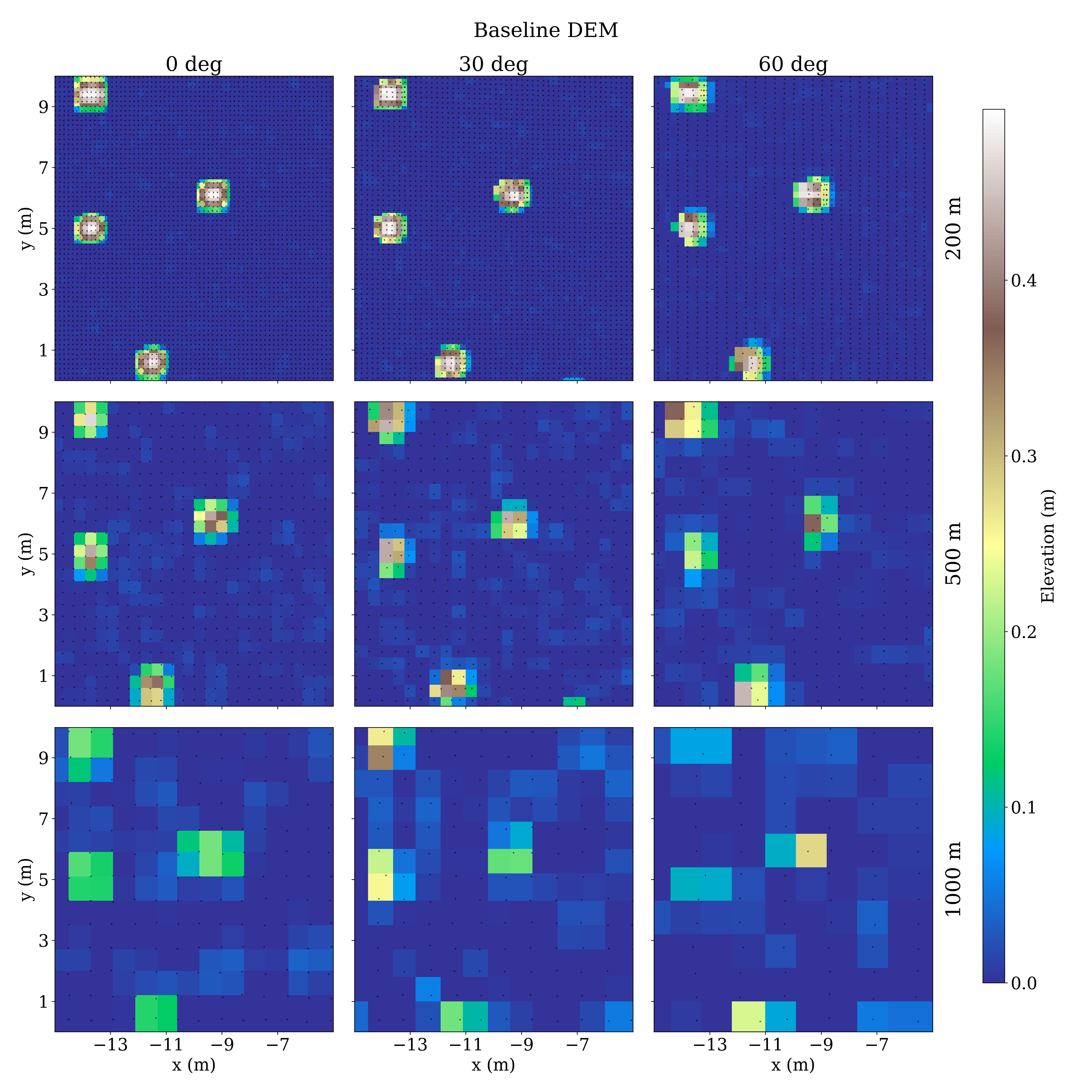}
    \caption{DEMs constructed by the baseline algorithm over a 10 x 10 meter segment for different ranges and angles of the observation. The black dots denote the LiDAR PCD.}
    \label{fig:baseline-dem}
\end{figure}

\begin{figure}
    \centering
    \includegraphics[width=\linewidth]{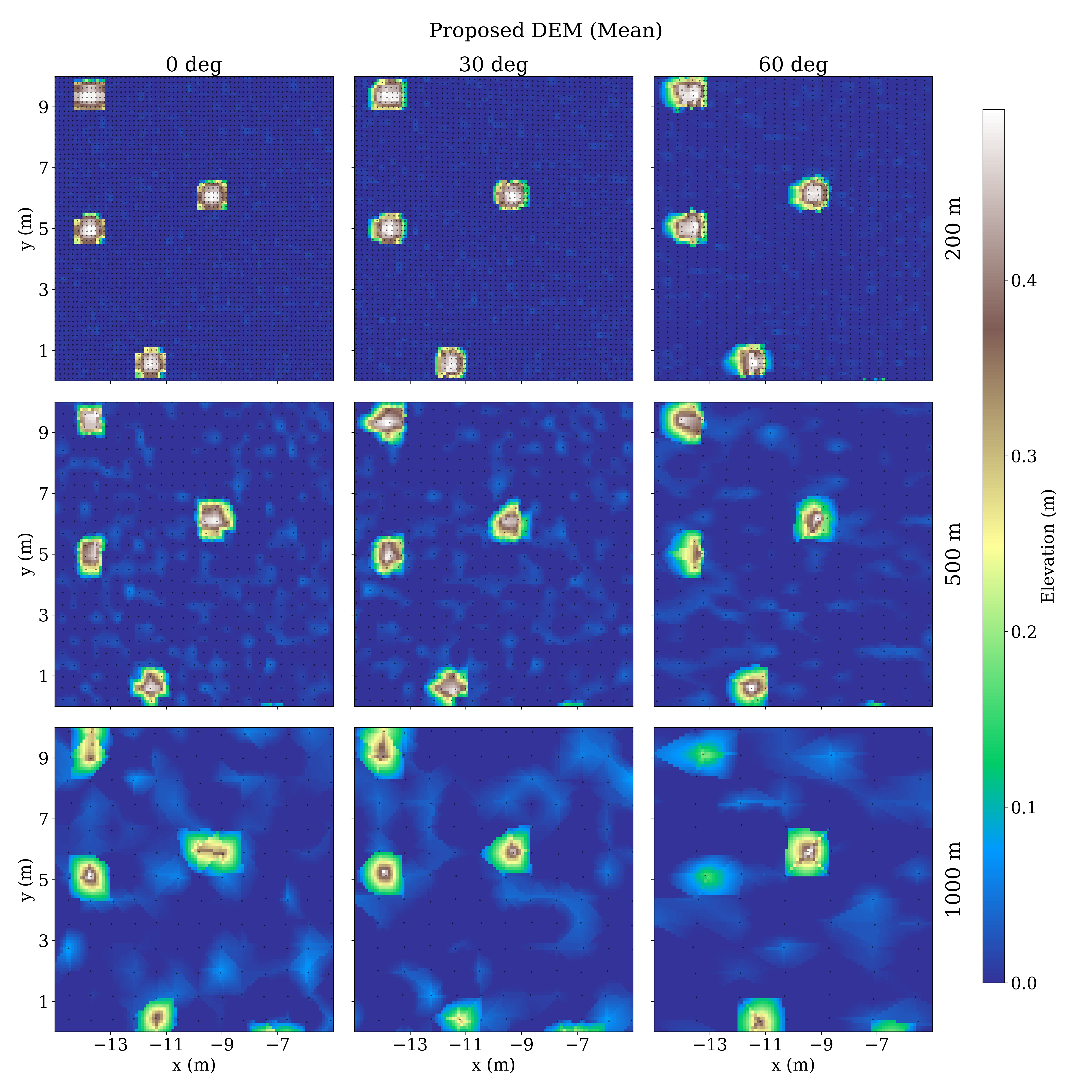}
    \caption{Mean DEMs constructed by the proposed algorithm over a 10 x 10 meter segment for different ranges and angles of the observation. The black dots denote the LiDAR PCD.}
    \label{fig:proposed-dem-mean}
\end{figure}

\begin{figure}
    \centering
    \includegraphics[width=\linewidth]{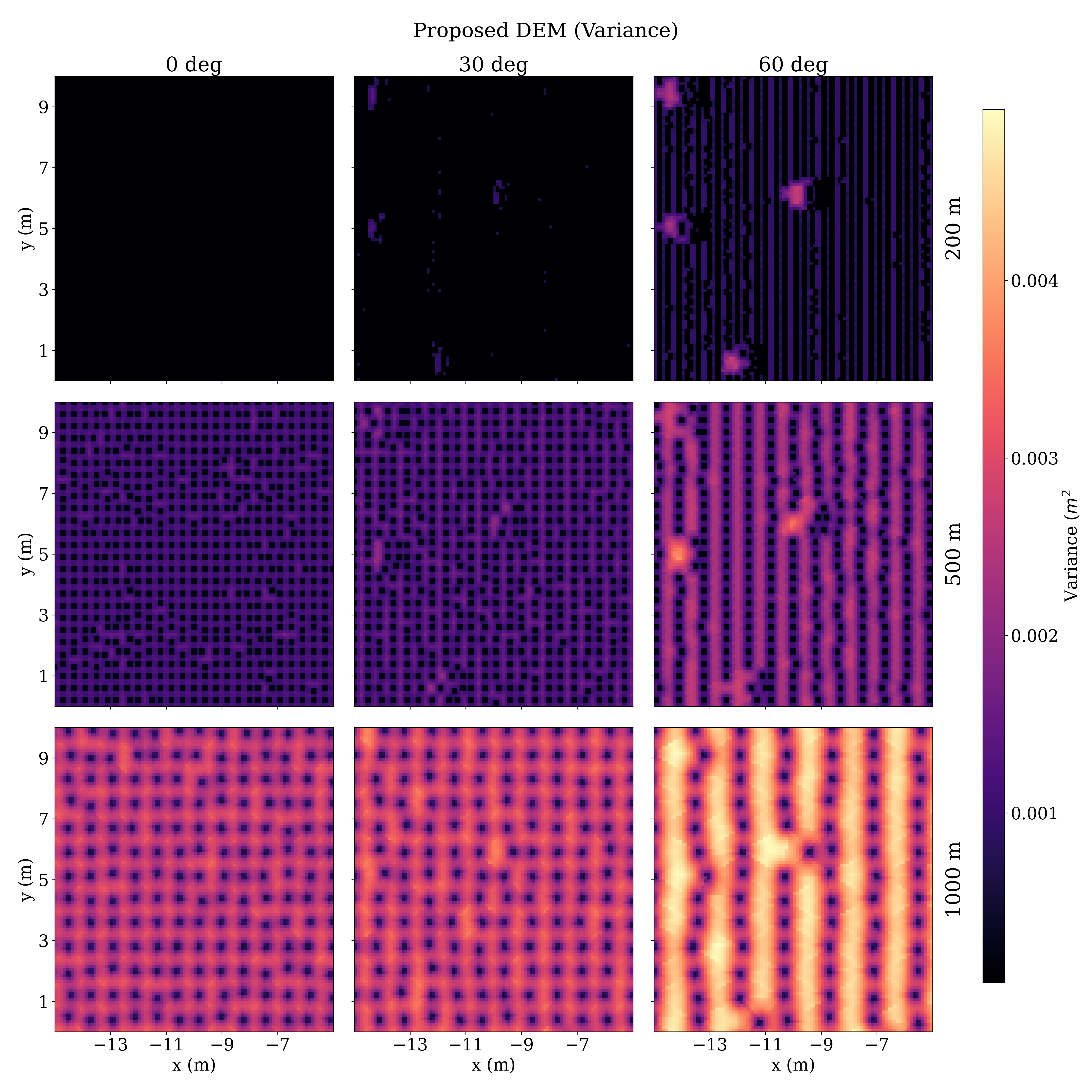}
    \caption{Variance DEMs constructed by the proposed algorithm over a 10 x 10 meter segment for different ranges and angles of the observation. The black dots denote the LiDAR PCD.}
    \label{fig:proposed-dem-var}
\end{figure}

Figure \ref{fig:baseline-dem} shows qualitative results for the baseline algorithm, while Figs. \ref{fig:proposed-dem-mean} and \ref{fig:proposed-dem-var} show the mean and variance of the estimated DEM by the proposed algorithm. Baseline DEMs suffer from severe pixelation due to enlarged GSDs of the PCD at longer observational ranges or larger observational angles. In contrast, our proposed method consistently reconstructs most of the rocks. Additionally, the DEM variance shown in Figure \ref{fig:proposed-dem-var} effectively quantifies the DEM uncertainty due to PCD sparsity. For example, when the observation is made from 60 degrees toward the positive x-direction at a distance of 200 m, occluded areas behind the rocks exhibit higher uncertainty than surrounding pixels. For larger GSDs, due to an increased observational range or angle, the DEM uncertainties increase accordingly.

\begin{figure}
    \centering
    \includegraphics[width=\linewidth]{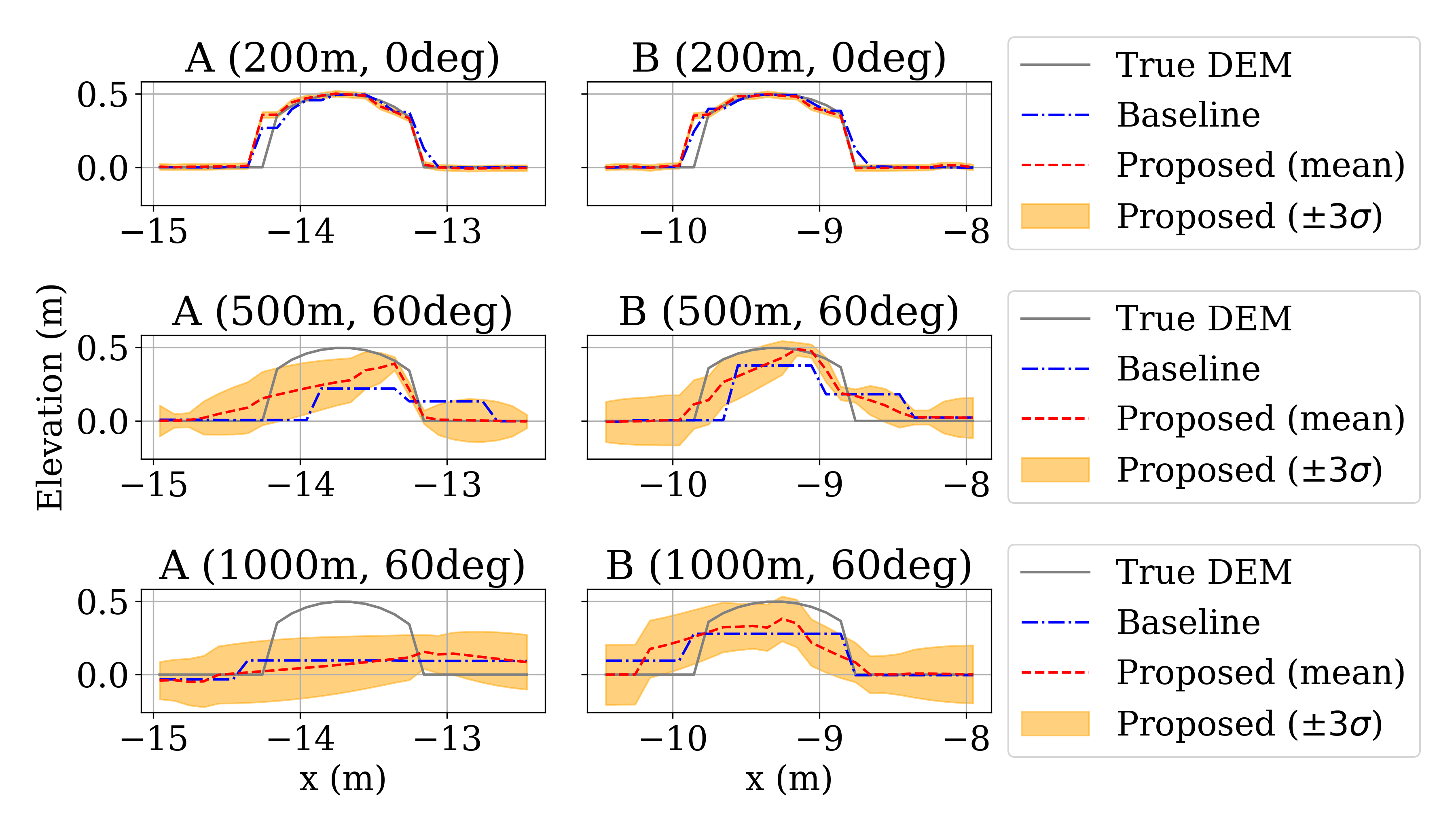}
    \caption{Comparisons of the true DEM, baseline DEM, and DEM distribution by the proposed algorithm.}
    \label{fig:rock-ab}
\end{figure}

Figure \ref{fig:rock-ab} provides a detailed analysis of the estimated DEM around rocks A and B, which are annotated in Fig. \ref{fig:rock-data-pipeline}. The left and right columns display results for rocks A and B, respectively. Each row presents results for different LiDAR scan configurations as indicated in their titles. When sufficient samples are obtained from the observation, both the baseline and proposed algorithms successfully reconstruct the rocks, as in the case of a range of 200 m and an angle of 0°. However, under more challenging observational conditions, the baseline algorithm fails to reconstruct the rocks. Since it adjusts the resolution proportional to GSD, the averaging effect reduces the estimated local elevation. In contrast, the proposed algorithm maintains a user-defined high resolution, avoiding such averaging effects. Additionally, the proposed algorithm increases local elevation uncertainty for larger GSDs, which helps represent the possibility of rock presence. However, when the GSD is too large, as in the case of a range of 1000 m and an angle of 60°, resulting in an average GSD larger than the rock diameter (see Table \ref{tab:lidar-config}), the proposed algorithm cannot always reconstruct local rocks. It can represent the rock feature for rock B but fails for rock A. Nonetheless, even for rock A, the algorithm increases the uncertainty due to the large GSD, enabling the hazard detection algorithm to detect potential landing risks, as discussed in the next section.

\subsubsection{Safety Prediction Performance}

\begin{table}[ht]
\caption{\label{tab:rock_precision_recall} Comparison of precision and recall for different LiDAR scan configurations.}
\centering
\begin{tabular}{cccccccccc} \toprule
&& \multicolumn{4}{c}{Precision $\uparrow$} & \multicolumn{4}{c}{Recall $\uparrow$}\\ \cline{3-6} \cline{7-10} 
\multicolumn{2}{c}{Scan Configuration} & \multicolumn{2}{c}{Slope} & \multicolumn{2}{c}{Roughness} & \multicolumn{2}{c}{Slope} & \multicolumn{2}{c}{Roughness} \\
Range, m & Angle, deg & Baseline & Proposed & Baseline & Proposed & Baseline & Proposed & Baseline & Proposed \\
\hline \hline
200 & 0 & 0.9999 & \bf{1.0000} & 0.9532 & \bf{1.0000} & \bf{0.9991} & 0.8226 & \bf{1.0000} & 0.9335 \\
200 & 30 & 0.9997 & \bf{1.0000} & 0.9428 & \bf{0.9991} & \bf{0.9980} & 0.8150 & \bf{0.9999} & 0.9266 \\
200 & 60 & 0.9997 & \bf{1.0000} & 0.9308 & \bf{0.9960} & \bf{0.9983} & 0.8119 & \bf{0.9999} & 0.9189 \\
500 & 0 & 0.9980 & \bf{1.0000} & 0.9245 & \bf{1.0000} & \bf{1.0000} & 0.8214 & \bf{1.0000} & 0.9313 \\
500 & 30 & 0.9980 & \bf{1.0000} & 0.9664 & \bf{0.9987} & \bf{1.0000} & 0.8257 & \bf{0.9949} & 0.9269 \\
500 & 60 & 0.9967 & \bf{1.0000} & 0.8986 & \bf{0.9982} & \bf{1.0000} & 0.8450 & \bf{0.9996} & 0.9238 \\
1000 & 0 & 0.9888 & \bf{0.9985} & 0.7970 & \bf{0.9973} & \bf{1.0000} & 0.8966 & \bf{1.0000} & 0.9242 \\
1000 & 30 & 0.9969 & \bf{0.9991} & 0.8352 & \bf{0.9967} & \bf{1.0000} & 0.8980 & \bf{0.9995} & 0.9144 \\
1000 & 60 & 0.9911 & \bf{0.9987} & 0.7859 & \bf{0.9573} & \bf{1.0000} & 0.9318 & \bf{0.9998} & 0.8828 \\ \bottomrule
\end{tabular}
\end{table}

Next, we evaluate the safety prediction performance using both quantitative and qualitative methods. For quantitative analysis, we use precision (Eq. \eqref{eq: precision}) and recall (Eq. \eqref{eq: recall}) for both slope and roughness hazards. To compute the precision and recall, any stochastic estimate of landing safety is converted to a binary estimate using a threshold of 0.5; for instance, if $P(\text{Safe}) > 0.5$, then we label that pixel as safe.

\begin{figure}
    \centering
    \includegraphics[width=\linewidth]{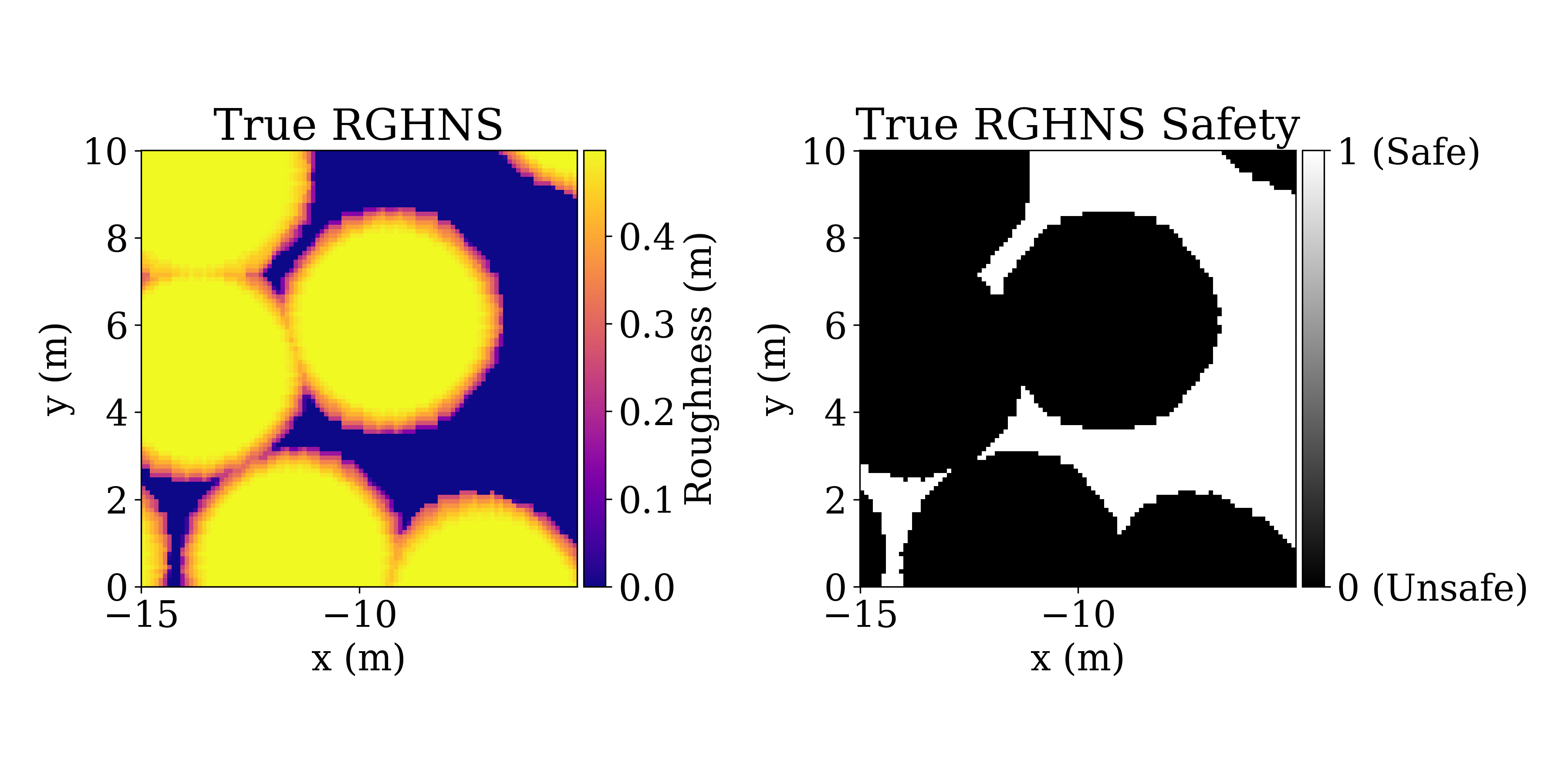}
    \caption{Ground truth roughness and roughness-based safety over a 10 x 10 meter segment.}
    \label{fig:rghns-gt}
\end{figure}

\begin{figure}
    \centering
    \includegraphics[width=\linewidth]{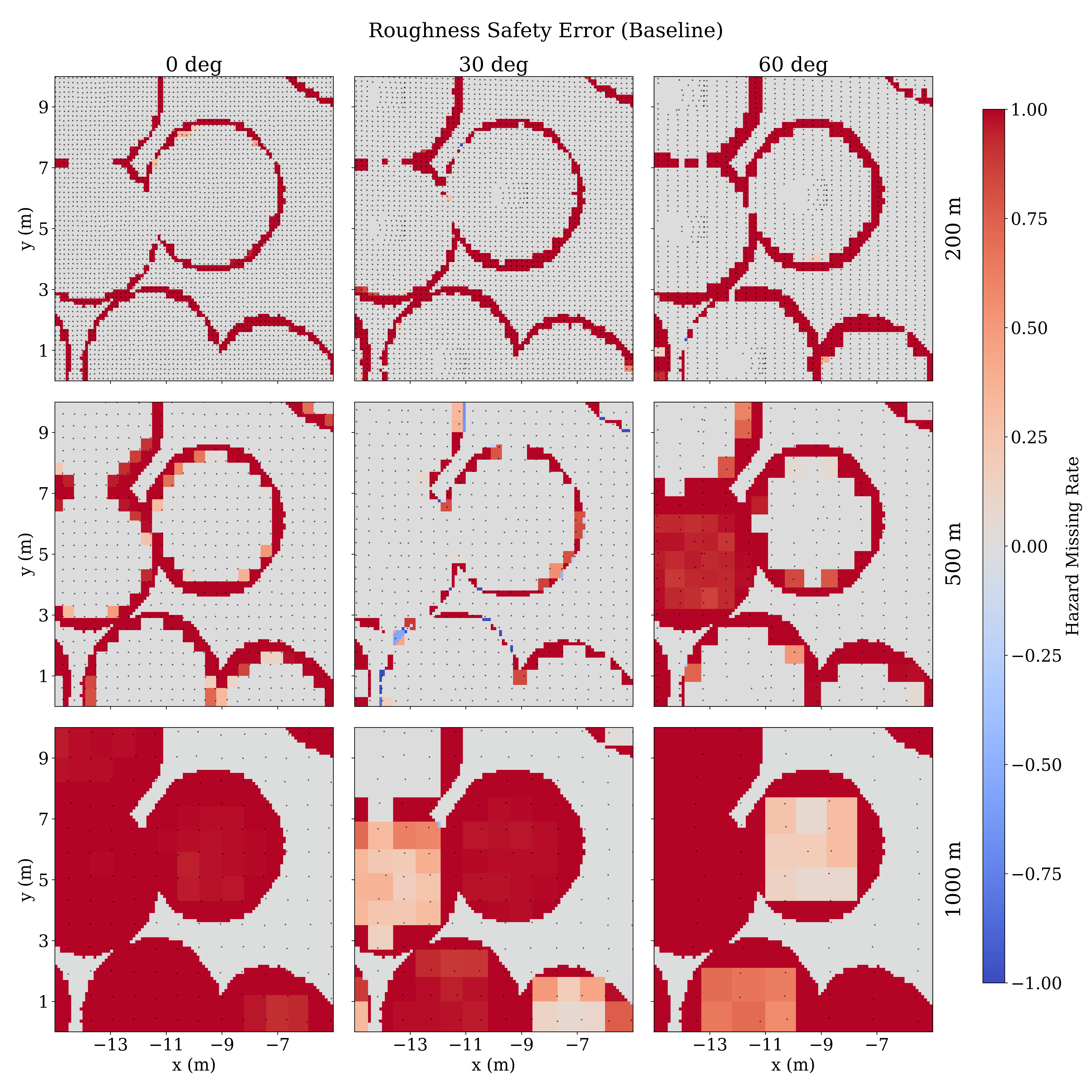}
    \caption{Roughness hazard missing rates over a 10 x 10 meter segment for the baseline algorithm. The black dots denote the LiDAR PCD.}
    \label{fig:baseline-rghns}
\end{figure}

\begin{figure}
    \centering
    \includegraphics[width=\linewidth]{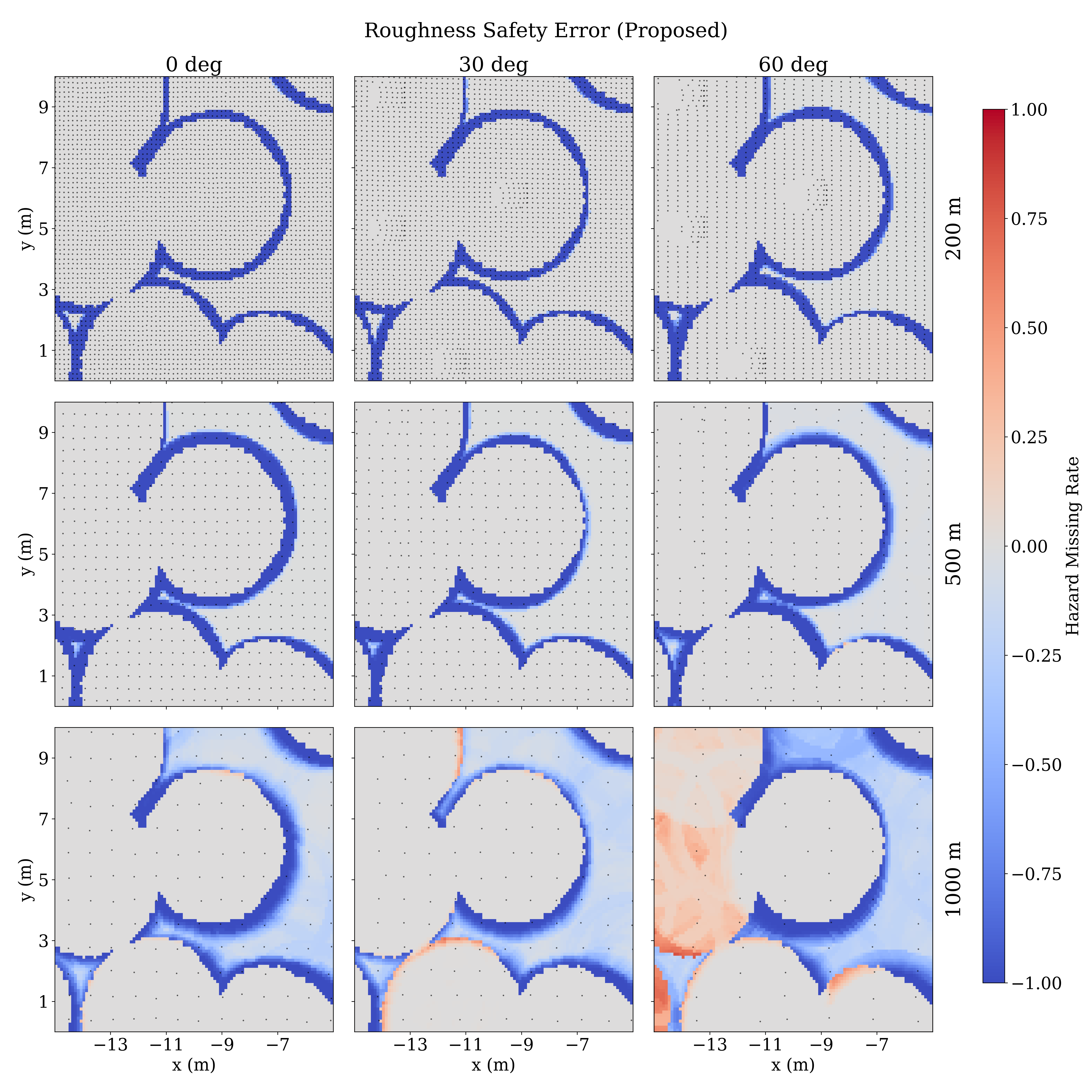}
    \caption{Roughness hazard missing rates over a 10 x 10 meter segment for the proposed algorithm. The black dots denote the LiDAR PCD.}
    \label{fig:proposed-rghns}
\end{figure}

Figure \ref{fig:rghns-gt} shows the true roughness and roughness safety over a 10 x 10 meter segment. To evaluate the safety map estimation performance, we visualized the hazard missing rate in Figs. \ref{fig:baseline-rghns} and \ref{fig:proposed-rghns} for the baseline and proposed algorithms, respectively. The hazard missing rate is defined as $(\text{True Safety}) - (\text{Estimated Safety Probability}) \in [-1, 1]$. A hazard missing rate of 0 indicates an accurate prediction, closer to 1 indicates more missed hazards, and closer to -1 indicates overly conservative hazard assessments. For the baseline algorithm, challenging observational conditions with larger observational ranges and angles result in more missed roughness hazards due to the averaging effect caused by coarse resolution, as detailed in the previous section.

Note that the baseline algorithm results in some missing hazards at the hazard boundaries even under moderate observation conditions (e.g., observations from 200 m). This is also due to the adjusted pixel sizes. The ground truth DEMs and safety maps are generated with a 0.1 m/pix resolution, and DEMs with larger pixel sizes result in missing hazards around the boundary due to rounding errors.

Figure \ref{fig:proposed-rghns} demonstrates that the proposed algorithm can detect roughness hazards even under challenging observational conditions. For observations at a range of 1000 m, the algorithm increases safety uncertainty in response to the large GSDs, resulting in a reduced hazard missing rate. Quantitative results for all cases are summarized in Table~\ref{tab:rock_precision_recall}.

\begin{figure}
    \centering
    \includegraphics[width=\linewidth]{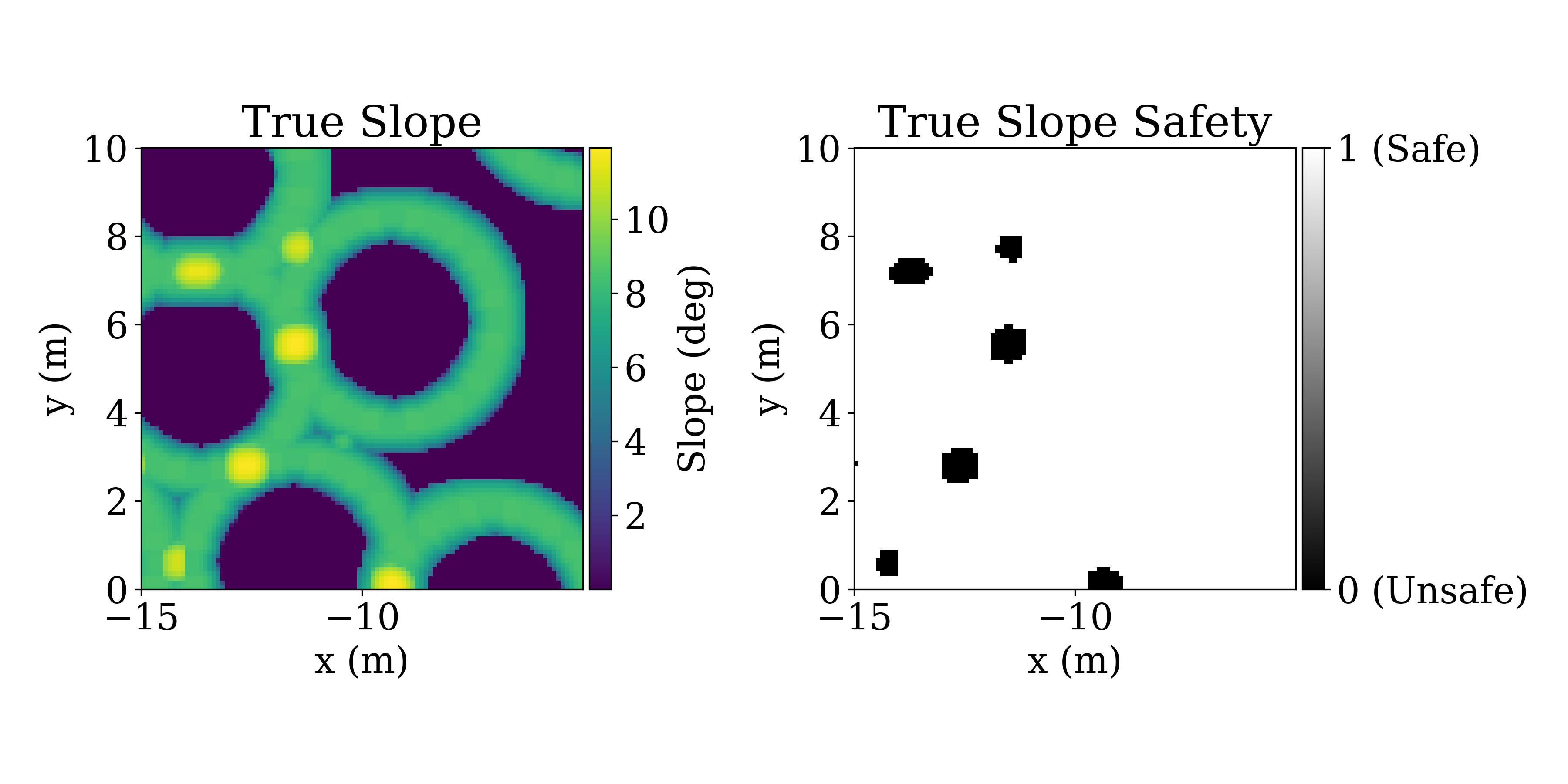}
    \caption{Ground truth slope and slope-based safety over a 10 x 10 meter segment.}
    \label{fig:slope-gt}
\end{figure}

\begin{figure}
    \centering
    \includegraphics[width=\linewidth]{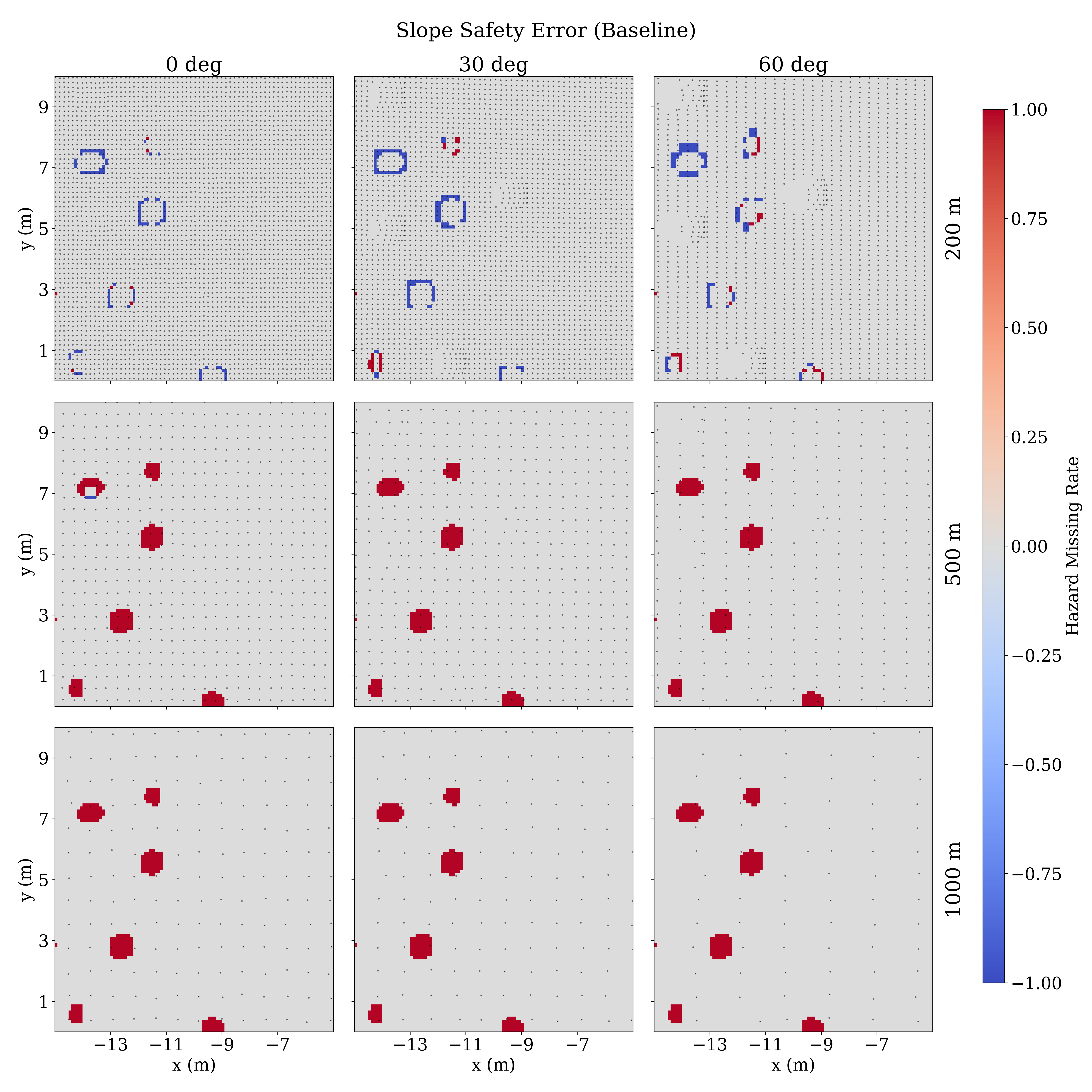}
    \caption{Slope hazard missing rates over a 10 x 10 meter segment for the baseline algorithm. The black dots denote the LiDAR PCD.}
    \label{fig:baseline-slope}
\end{figure}

\begin{figure}
    \centering
    \includegraphics[width=\linewidth]{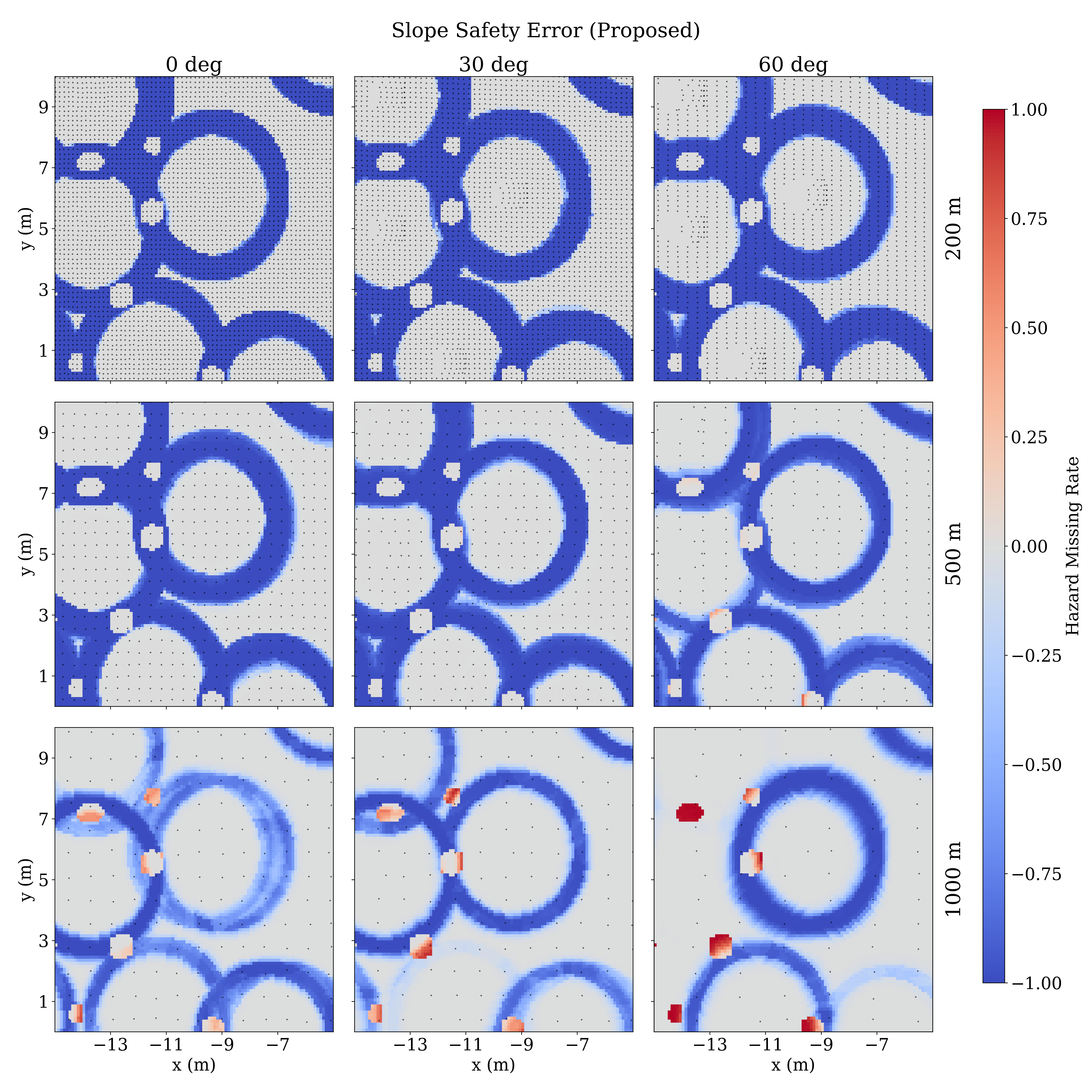}
    \caption{Slope hazard missing rates over a 10 x 10 meter segment for the proposed algorithm. The black dots denote the LiDAR PCD.}
    \label{fig:proposed-slope}
\end{figure}

Figure \ref{fig:slope-gt} shows the true slope and slope hazard map. Figures \ref{fig:baseline-slope} and \ref{fig:proposed-slope} demonstrate that the proposed algorithm has a better hazard detection rate compared to the baseline, similar to the roughness analysis. The main difference is that the baseline algorithm misses hazards under more moderate observational conditions, while the proposed algorithm provides a more conservative evaluation of hazards than in the roughness case. This occurs because, in this testbed scenario, the slope values are at the boundary between safe and unsafe, as indicated in Figure \ref{fig:slope-gt}, making the trend of missed hazards and conservative evaluations by both algorithms more pronounced.

\subsection{Performance Analysis with Realistic Data}
To evaluate the algorithm's performance on realistic data, we selected ten distinct locations on the Mars Digital Terrain Model (DTM) for the candidate ExoMars landing site in Hypanis Valles~\cite{hirise}. The visualization of this realistic dataset with region specifications is given in Appendix~\ref{appdx: real-data}.

Here, we also conducted a sensitivity analysis on the hyperparameter $\ell$ of the Gaussian Random Field (GRF) regression described in Eq. \eqref{eq:ae-kernel-realtimeshd}. The prior standard deviation of elevation $\sigma_f$ was computed based on the distribution of the Point Cloud Data (PCD) elevations. Specifically, Table \ref{tab:real_precision_recall} presents the comparison of precision and recall for various $\ell$ values at a range of 500 m with the terrain sensor oriented nadir. The general trend indicates that the proposed algorithm is more conservative than the baseline, which aligns with the results from the simulated testbeds; the proposed algorithm demonstrates higher precision, while the baseline exhibits higher recall. Table \ref{tab:real_precision_recall} further demonstrates that smaller values of $\ell$ result in more conservative estimates, yielding higher precision and lower recall. This behavior is consistent with the physical interpretation of $\ell$; the hyperparameter $\ell$ defines the degree of correlation between nearby elevations, as reflected in the kernel's formulation in Eq. \eqref{eq:ae-kernel-realtimeshd}.

This trend is visually demonstrated by Figs. \ref{fig:slope-real} and \ref{fig:rghns-real}, which illustrate the results for region R4 with $\ell=1.0$ m. In particular, Fig. \ref{fig:rghns-real} highlights that the baseline algorithm fails to detect most roughness hazards, whereas the proposed algorithm successfully identifies the majority of them. 

\begin{table}[ht]
    \caption{\label{tab:real_precision_recall} Comparison of precision and recall for different $\ell$ values at 500m range; $\ell$ in meters.}
    \centering
    \begin{tabular}{ccccccccc} \toprule
     & \multicolumn{4}{c}{Slope} & \multicolumn{4}{c}{Roughness} \\
    \cline{2-5} \cline{6-9}
    &Baseline& \multicolumn{3}{c}{Proposed} &Baseline& \multicolumn{3}{c}{Proposed} \\
    \cline{3-5} \cline{7-9} 
    &  & $\ell=0.3$ & $\ell=1.0$ & $\ell=5.0$ &  & $\ell=0.3$ & $\ell=1.0$ & $\ell=5.0$ \\
    \hline \hline
    Precision $\uparrow$ & 0.9515 & \bf{0.9999} & \bf{0.9999} & 0.9997 & 0.8741 & \bf{0.9995} & 0.9906 & 0.9795 \\
    Recall $\uparrow$ & \bf{0.9960} & 0.3417 & 0.5113 & 0.6098 & \bf{0.9999} & 0.0030 & 0.0562 & 0.1389 \\
    \bottomrule
    \end{tabular}
\end{table}

\begin{figure}
    \centering
    \includegraphics[width=0.75\linewidth]{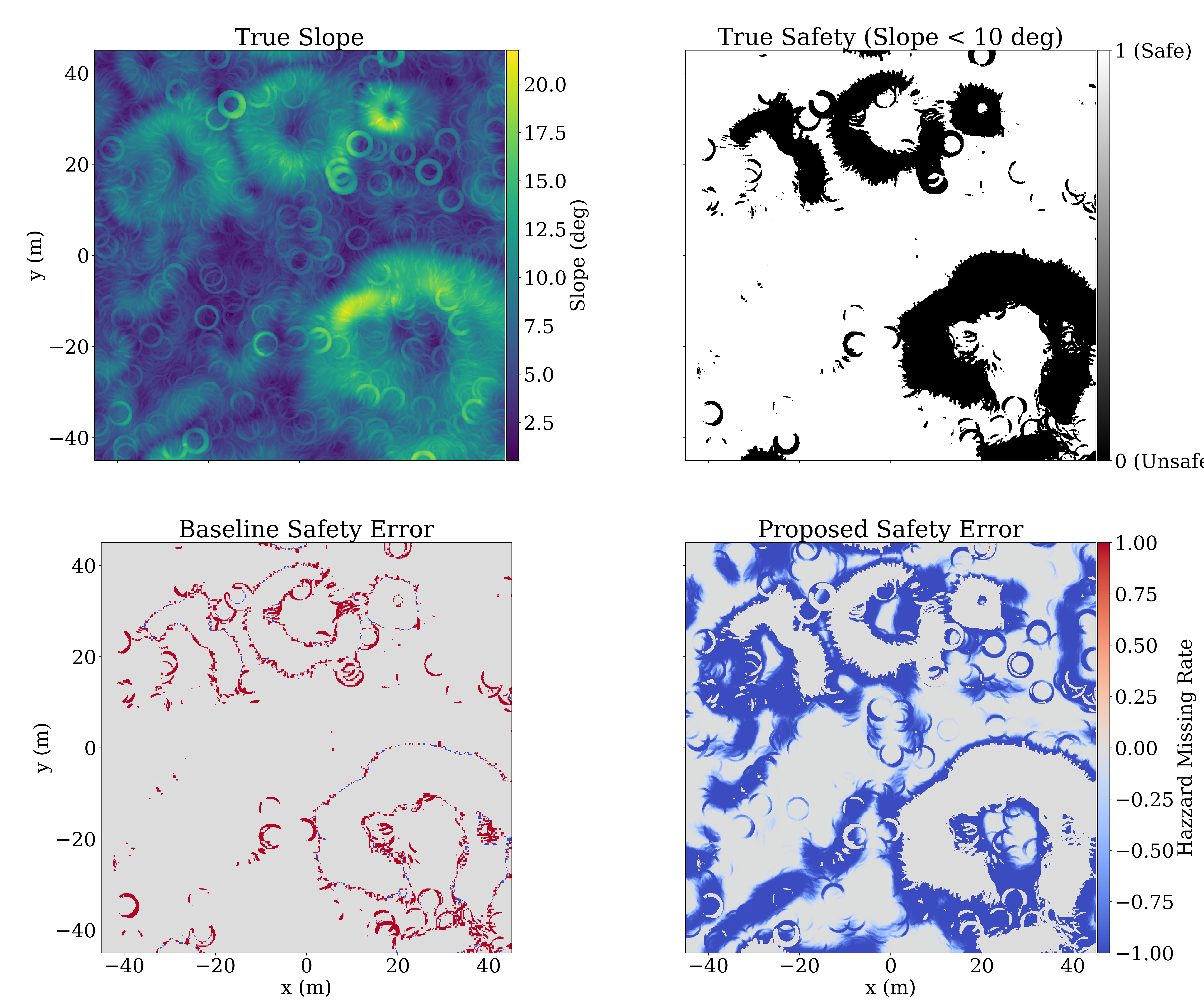}
    \caption{Performance comparison of baseline and proposed algorithms for realistic terrain: slope case.}
    \label{fig:slope-real}
\end{figure}

\begin{figure}
    \centering
    \includegraphics[width=0.75\linewidth]{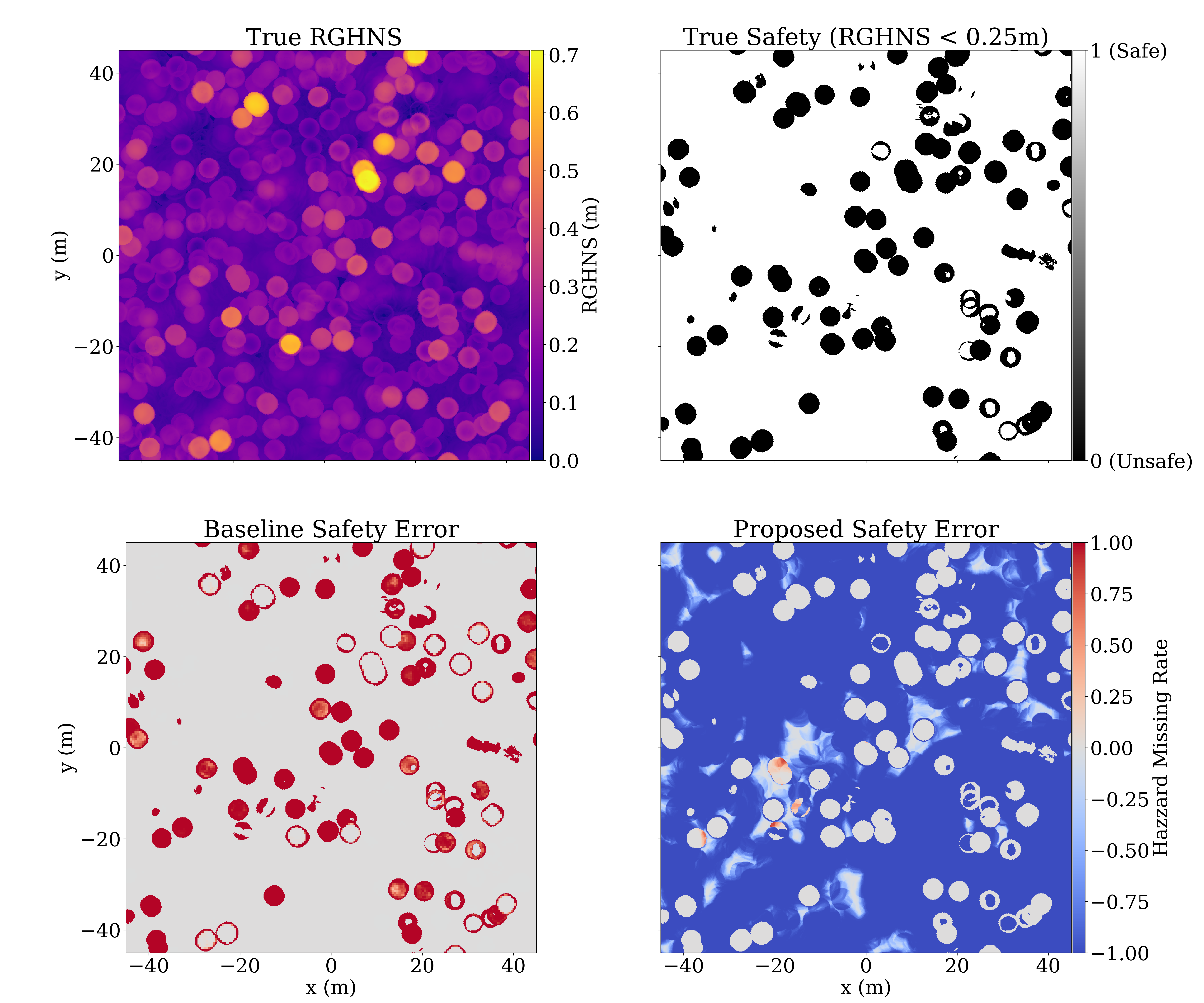}
    \caption{Performance comparison of baseline and proposed algorithms for realistic terrain: roughness case.}
    \label{fig:rghns-real}
\end{figure}

\subsection{Computational Cost Analysis}

\begin{table}[htbp!]
\centering
\caption{\label{tab:runtime} Run time comparison across different map resolutions.}
\begin{threeparttable}
\begin{tabular}{ccccc}
\toprule
Resolution,  & \multicolumn{2}{c}{{Baseline}, s $\downarrow$} & \multicolumn{2}{c}{{Proposed}, s $\downarrow$} \\
m/pix & DEM$^*$ & Safety & DEM & Safety \\
\hline \hline
0.5 & - & 0.0412 & 0.1248 & 0.0016 \\
0.3 & - & 0.3890 & 0.2991 & 0.0469 \\
0.2 & - & 2.5194 & 0.7112 & 0.0724 \\
0.1 & - & 84.3238 & 2.7666 & 1.0420 \\
\bottomrule
\end{tabular}
\begin{tablenotes}
\item {\footnotesize $^*$ The baseline DEM generation algorithm cannot specify the resolution.}
\end{tablenotes}
\end{threeparttable}
\end{table}

Our contributions include not only improved hazard detection but also reduced computational cost. Table \ref{tab:runtime} shows the run time comparison between the proposed and baseline algorithms. We measured run time for a 100x100 m terrain observed at an altitude of 500 m with LiDAR directly pointing downward, whose detector size is 256x256. Our proposed algorithm can generate the DEM and safety maps at user-defined resolutions, and we measured the run time across a set of resolutions: 0.1, 0.2, 0.3, and 0.5 m/pix. Note that we measured the computational time on a standard desktop PC, and the purpose of this analysis is to compare the run time between the proposed and baseline algorithms for different resolutions; the actual run time depends on the choice of hardware for each mission. 

The terrain mapping and processing for HDA consists of two steps: PCD to DEM, and DEM to safety map. We proposed new algorithms for both parts, and the run times for each part are given in Table \ref{tab:runtime}. Note that the ALHAT algorithm, which serves as the baseline algorithm in this paper, adjusts the DEM resolution with the GSD of the PCD, so we only measured the DEM to safety map part for the baseline.

Even ignoring the DEM construction run time of the baseline algorithm, the proposed algorithm achieves faster computation except for the coarse resolution of 0.5 m/pix. In higher resolution cases, the sum of the run times of PCD to DEM and DEM to safety map by the proposed algorithm is smaller than the run time of DEM to safety map by the baseline algorithm. Comparing DEM to safety map run times between the resolution of 0.2 and 0.1 m/pix cases, we can confirm that the baseline algorithm takes $O(p^5)$ while the proposed algorithm takes $O(p^4)$, which aligns with the algorithm's design (Appendix \ref{appdx: scalability}). Here, $p$ is the resolution, which is the number of pixels that span a unit length.

\section{Conclusion}
Following the presentation of the detailed background and literature review, this paper addressed the challenge of landing safety evaluation with topographic uncertainty caused by the sparse measurement of terrain sensors, a critical issue for achieving reliable landing safety evaluations. We presented a novel modeling technique for topographic uncertainty using Gaussian random fields (GRF), effectively capturing the variability and uncertainty in terrain data resulting from sparse sensor measurements. 

We proposed Gaussian Digital Elevation Maps, which is a stochastic variant of conventional Digital Elevation Maps (DEMs) and each cell contains means and variances of the local elevation. We designed a novel real-time Gaussian DEM construction algorithm by applying Delauney triangulation and local GRF regression. 

We also developed a novel real-time stochastic landing safety evaluation algorithm. First, we designed a provably conservative real-time hazard detection algorithm for regular DEM input. The geometric investigation of the lander-terrain interaction is exploited to efficiently evaluate the conservative local slope and roughness only based on height differences while avoiding the costly computation of the landing plane. Then, the developed algorithm was extended for Gaussian DEM input. 

The effectiveness of these methods was demonstrated through detailed studies with simulated testbeds. The results showed that the proposed algorithm evaluated landing safety more reliably than the state-of-the-art (SOTA) algorithm, particularly in scenarios with sparse terrain measurements; at maximum, the proposed algorithm increased the precision by 20\%. As for computational performance, for high-resolution input of 10 meters per pixel for 100 x 100 m terrain, the proposed algorithm improved computation time more than 20 times; reduced to 3.8 seconds from 84.3 seconds by the SOTA algorithm.
This highlighted the practical applicability and robustness of the proposed methods in realistic space mission conditions.

\section*{Appendix}

\subsection{Scalability of Hazard Detection Algorithms}\label{appdx: scalability}
For fixed sizes of the digital elevation map (DEM) length and the lander diameter, their pixel span scales proportionally to $p$, the unit length of the pixel size. To evaluate the roughness safety, we need to compute all the pixels within the lander footprint, which scales $O(p^2)$. We repeat this process for all possible lander orientations and for every DEM pixel. The number of lander orientation steps is driven by the width of the lander in pixels. Therefore, the exact hazard detection (HD) algorithm scales as $O(p^5)$~\cite{johnson2022OPTIMIZATIONLIDARBASED}. The proposed provably conservative HD algorithm allows us to conservatively skip repeated evaluation for all possible orientations, resulting in scalability $O(p^4)$.

\subsection{Proof of Theorem \ref{theorem:max-slope-tri}}\label{appdx: proof}

For a given triangle formed by $\bm{p}_1$, $\bm{p}_2$, and $\bm{p}_3$, without loss of generality, we can take a local-vertical local-horizontal (LVLH) frame $\mathcal{F}^0=\{\bm{i}^0, \bm{j}^0, \bm{k}^0\}$ such that $\bm{p}_1=\bm{0}$,  $\bm{p}_2=[p_{2, x}, 0, p_{2, z}]$, $\bm{p}_3=[p_{3, x}, p_{3, z}, p_{3, z}]$, and all the nonzero elements are positive except for $p_{3, x}$.
As shown in Fig. \ref{fig:triangle-geom}, the triangle $\bm{p}_1\bm{p}_2\bm{p}_3$ is reached by applying a sequence of intrinsic rotations to a triangle on the $\bm{i}\bm{j}$ plane. 
Specifically, let $R_0^1$ be the rotation about $\bm{j}^0$ by $\theta$, a rotation from $\mathcal{F}^0$ to $\mathcal{F}^1=\{\bm{i}^1, \bm{j}^1, \bm{k}^1\}$. Similarly, let $R_1^2$ be the rotation about $\bm{i}^1$ by $\phi$, a rotation from $\mathcal{F}^1$ to $\mathcal{F}^2=\{\bm{i}^2, \bm{j}^2, \bm{k}^2\}$. 
In $\mathcal{F}^2$, the triangle is located on the $\bm{i}^2\bm{j}^2$ plane. Let the vertex coordinates in $\mathcal{F}^2$ be $\bm{p}^2_1=\bm{0}$,  $\bm{p}^2_2=[x_2, 0, 0]$, and $\bm{p}^2_3=[x_3, y_3, 0]$ where all non-zero elements are positive except for $x_3$. 
Then, the vertex locations in $\mathcal{F}^0$ can be computed as $P^0 = R_0^1 R_1^2 P^2$ where $P^0=[\bm{p}_1$, $\bm{p}_2$, $\bm{p}_3]\in\R^{3\times 3}$ and $P^2=[\bm{p}^2_1$, $\bm{p}^2_2$, $\bm{p}^2_3]\in\R^{3\times 3}$ are the matrices representing the triangle vertex locations in $\mathcal{F}_0$ and $\mathcal{F}_2$, respectively. The element-wise results are as follows. 
\begin{equation}\label{eq: p2p3}
    \bm{p}_2=
    \begin{bmatrix}
    x_2 \cos\theta\\ 0 \\ x_2\sin\theta
    \end{bmatrix}, \quad
    \bm{p}_3=
    \begin{bmatrix}
    x_3\cos\theta - y_3\sin\theta\sin\phi \\ y_3\cos\phi \\ x_3\sin\theta + y_3\cos\theta\sin\phi
    \end{bmatrix}
\end{equation}
The triangle's normal vector facing $z>0$ halfspace is computed by $(\bm{p}_2 - \bm{p}_1) \times (\bm{p}_3 - \bm{p}_1)$. Then, using $x_2, y_3>0$, the triangle's slope is obtained as follows.
\begin{equation}\label{eq: triangle-slope}
        s = \cos^{-1}\left(\frac{x_2 y_3 \cos\theta \cos\phi}{\sqrt{x_2^2 y_3^2}}\right)=\cos^{-1}\left(\cos\theta \cos\phi\right).
\end{equation}

\begin{figure}
    \centering
    \includegraphics[width=0.75\linewidth]{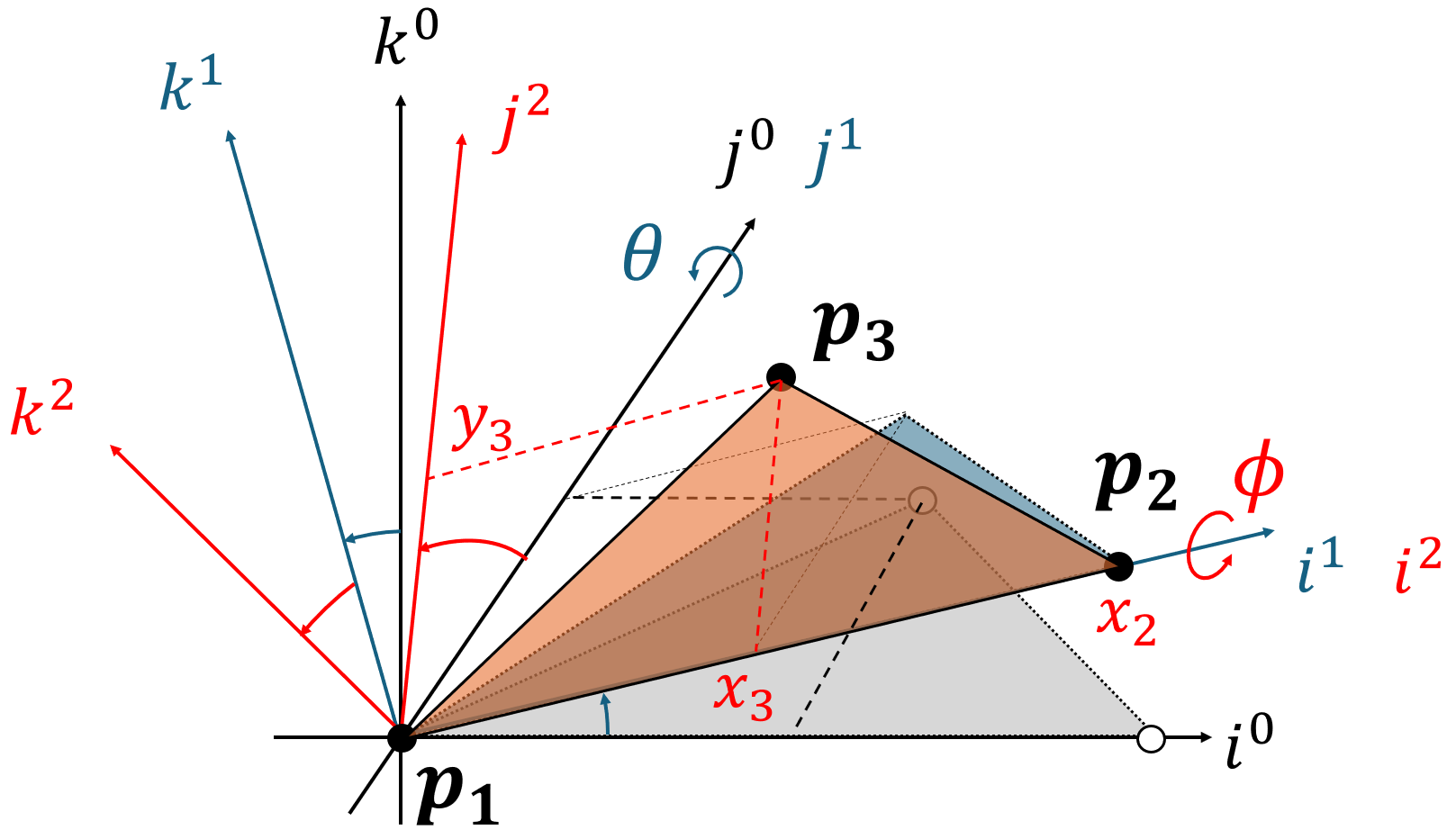}
    \caption{A triangle orientation represented by $\theta$ and $\phi$.}
    \label{fig:triangle-geom}
\end{figure}

The assumption by Theorem 1 bounds the range of $\theta$ and $\phi$. Substituting Eq. \eqref{eq: p2p3} into the bounded elevations of vertices, i.e., $0\leq p_{2, z}, p_{3, z} \leq \Delta z$, we have the following bounds on $\theta$ and $\phi$.
\begin{equation}\label{eq: angle-bounds}
    0\leq \sin\theta \leq \frac{\Delta z}{x_2}, \quad -\frac{x_3}{y_3}\tan\theta \leq \sin\phi \leq \frac{\Delta z - x_3\sin\theta}{y_3\cos\theta}
\end{equation}

The distance between the vertices $1$ and $2$ is larger than or equal to the distance from the vertex $1$ to the opposite edge, i.e., $x_2 \geq h_1$. Due to the assumption that $\Delta z < h_i$ for all $i=1, 2, 3$, we obtain $\Delta z /x_2 <1$. Therefore, the lower and the upper bounds of Eq. \eqref{eq: angle-bounds} for $\theta$ are both reachable at $p_{2, z}=0$ and $p_{2, z}=\Delta z$, respectively. For later use, let $\bar{\theta}$ denote the upper bound; $\bar{\theta}= \sin^{-1}(\Delta z / x_2)$. The range of $\theta$ becomes $[0, \bar{\theta}]$. 

The distance from vertex $2$ to the opposite edge, $h_2$, is obtained as $h_2=|x_2y_3|/\sqrt{x_3^2 + y_3^2}$. Substituting this to $\Delta z < h_2$ and reformulating the inequality, we obtain the following.
\begin{equation}
    \frac{y_3}{|x_3|} > \frac{\frac{\Delta z}{x_2}}{\sqrt{1- \left(\frac{\Delta z}{x_2}\right)^2}} \geq \tan\theta
\end{equation}
The inequality on the right was obtained by using $1 > \Delta z/x_2 \geq \sin\theta$ with $\tan\theta = \cos\theta / \sqrt{1-\sin^2\theta}$. Therefore, if $x_3\geq 0$, the lower bound for $\sin\phi$ in Eq. \eqref{eq: angle-bounds} is larger than $-1$. Otherwise, it is larger than $0$. In either case, the lower bound of $\phi$ in Eq. \eqref{eq: angle-bounds} is reachable at $p_{3, z}=0$.

To show the upper bound of $\sin\phi$ is reachable, define $f(\theta)$ as the right-hand side of Eq. \eqref{eq: angle-bounds} and compute the derivative as follows.
\begin{equation}
    f(\theta)\triangleq \frac{\Delta z - x_3\sin\theta}{y_3\cos\theta}, \quad
    f'(\theta)=\frac{\Delta z \sin\theta - x_3}{y_3\cos^2\theta}
\end{equation}
If $x_3>0$ and $x_3/\Delta z < \Delta z/x_2 $, the sign of $f'(\theta)$ shifts from negative to positive as $\theta$ grows. If $x_3\leq 0$, then $f'(\theta)$ remain positive. After all, it is sufficient to show $f(0) < 1$ and $f(\bar{\theta})<1$ to show that $f(\theta)<1$ for $\theta\in[0, \bar{\theta}]$ and that the upper bound of $\sin\phi$ is reachable. $f(0)=\Delta z / y_3 <1$ is immediate because $h_3=y_3$ and $\Delta z < h_3$. $f(\bar{\theta})<1$ can be shown using $\sin\bar{\theta}=\Delta z / x_2$ and $\Delta z > h_1=x_2y_3/\sqrt{(x_2-x_3)^2 + y_3^2}$, where $h_1$ is the distance from vertex 1 to the opposite edge. 

Since the lower and the upper bounds for $\sin\phi$ of Eq. \eqref{eq: angle-bounds} are both reachable, $\sin\phi$ is at one of the bounds when the triangle's slope of Eq. \eqref{eq: triangle-slope} is maximized. When $\sin\phi$ is at the lower bound, the triangle's slope becomes as follows.
\begin{equation}
    s(\theta) = \cos^{-1}\left(\cos\theta\sqrt{1-\frac{x_3^2}{y_3^2}\tan^2\theta}\right)
\end{equation}
which is maximized at $\theta=\bar{\theta}$. This maximum slope is achieved at $(p_{z, 1}, p_{z, 2}, p_{z, 2})=(0, \Delta z, 0)$.

Similarly, when $\sin\phi$ is at the upper bound, the triangle's slope is obtained as follows.
\begin{equation}
\begin{split}
    s(\theta) &= \cos^{-1}\left(\cos\theta\sqrt{1-\frac{(\Delta z - x_3\sin\theta)^2}{y_3^2\cos^2\theta}}\right)\\
    &=\cos^{-1}\left(\frac{1}{y_3}\sqrt{ - (x_3^2 + y_3^2)\sin^2\theta + 2x_3\Delta z \sin\theta + (y_3^2 - \Delta z^2)
    }\right)
\end{split}
\end{equation}
which is maximized at either $\theta=0$ or $\theta=\bar{\theta}$ because the argument inside the square root is a concave quadratic function of $\sin\theta$. The maximum slope in this case is achieved at either $(p_{z, 1}, p_{z, 2}, p_{z, 2})=(0, 0, \Delta z)$ or $(0, \Delta z, \Delta z)$.

When $(p_{z, 1}, p_{z, 2}, p_{z, 2})=(0, \Delta z, 0)$, $(0, 0, \Delta z)$ or $(0, \Delta z, \Delta z)$, the resulting slope is computed as $s=\sin^{-1}(\Delta z / h_2)$, $s=\sin^{-1}(\Delta z / h_3)$, or $s=\sin^{-1}(\Delta z / h_1)$, respectively.
\hspace*{\fill}$\square$

\subsection{Realistic Terrain Data}\label{appdx: real-data}
To evaluate the algorithm's performance on realistic data, we selected ten distinct locations on the Mars Digital Terrain Model (DTM) for the candidate ExoMars landing site in Hypanis Valles~\cite{hirise}, as illustrated in Fig. \ref{fig:real-hirise}. Randomly placed rocks were superimposed on these terrains to simulate realistic surface conditions. Figures \ref{fig:real-dems} and \ref{fig:real-safeties} present the corresponding DEMs and safety maps for the selected regions.

\begin{figure}
\centering
\includegraphics[width=0.5\linewidth]{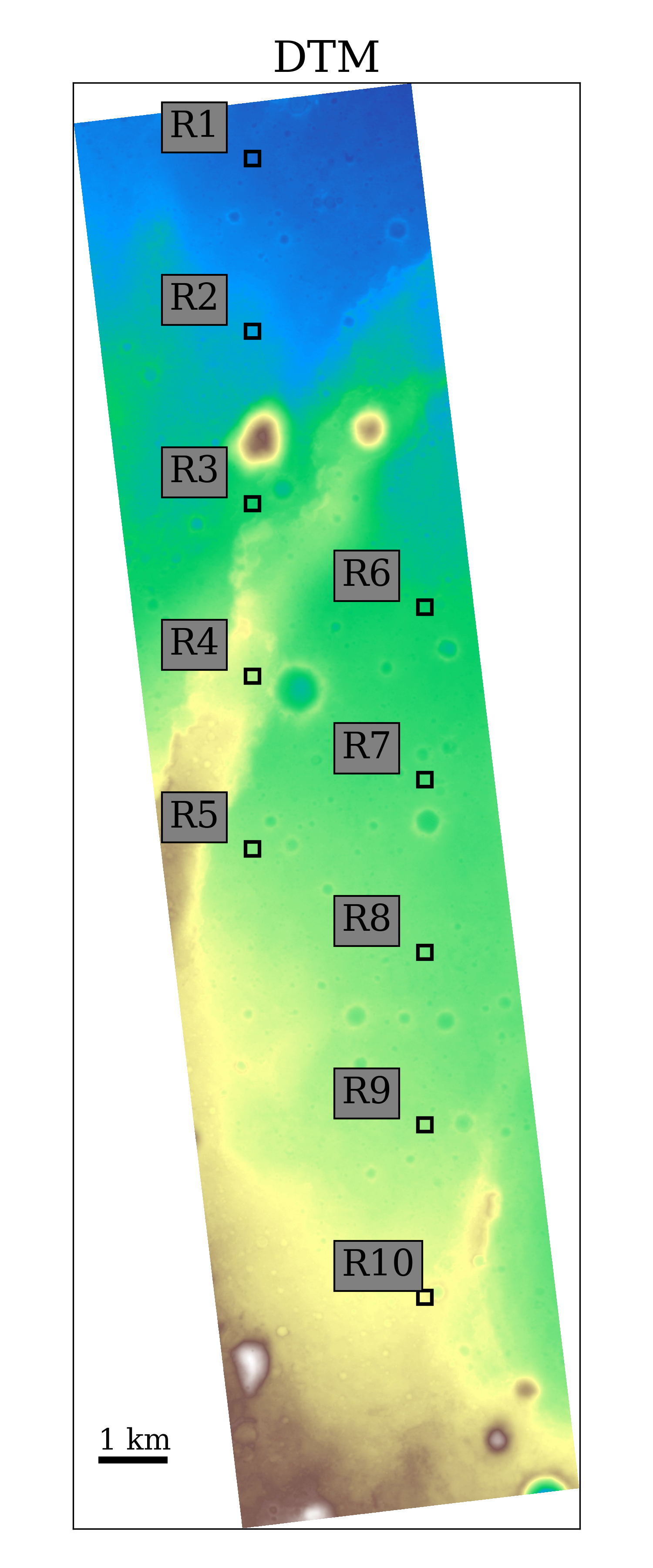}
\caption{Ten regions selected for high-fidelity analysis.}
\label{fig:real-hirise}
\end{figure}

\begin{figure}
\centering
\includegraphics[width=\linewidth]{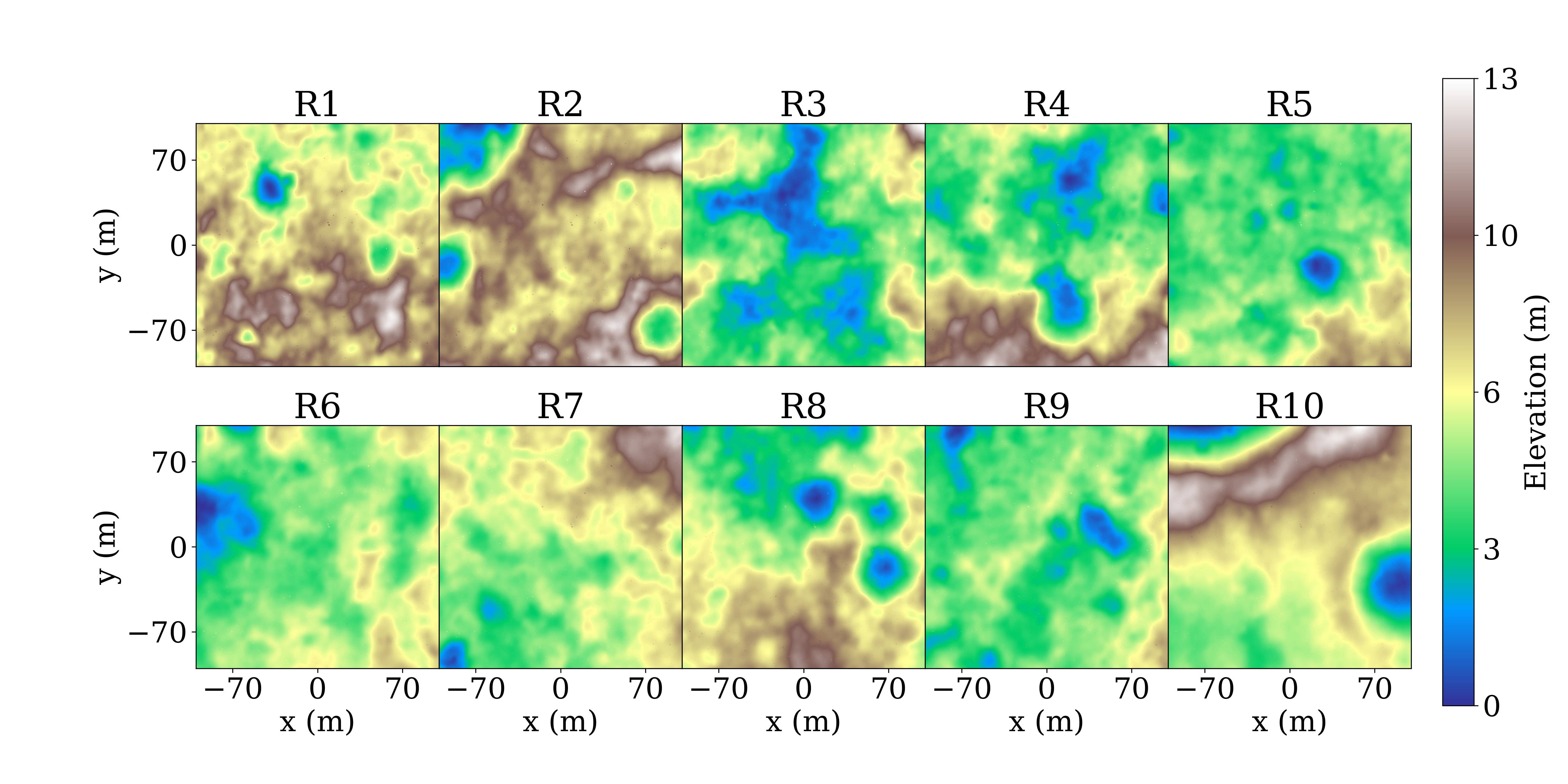}
\caption{Digital Elevation Models (DEMs) of the selected regions.}
\label{fig:real-dems}
\end{figure}

\begin{figure}
\centering
\includegraphics[width=\linewidth]{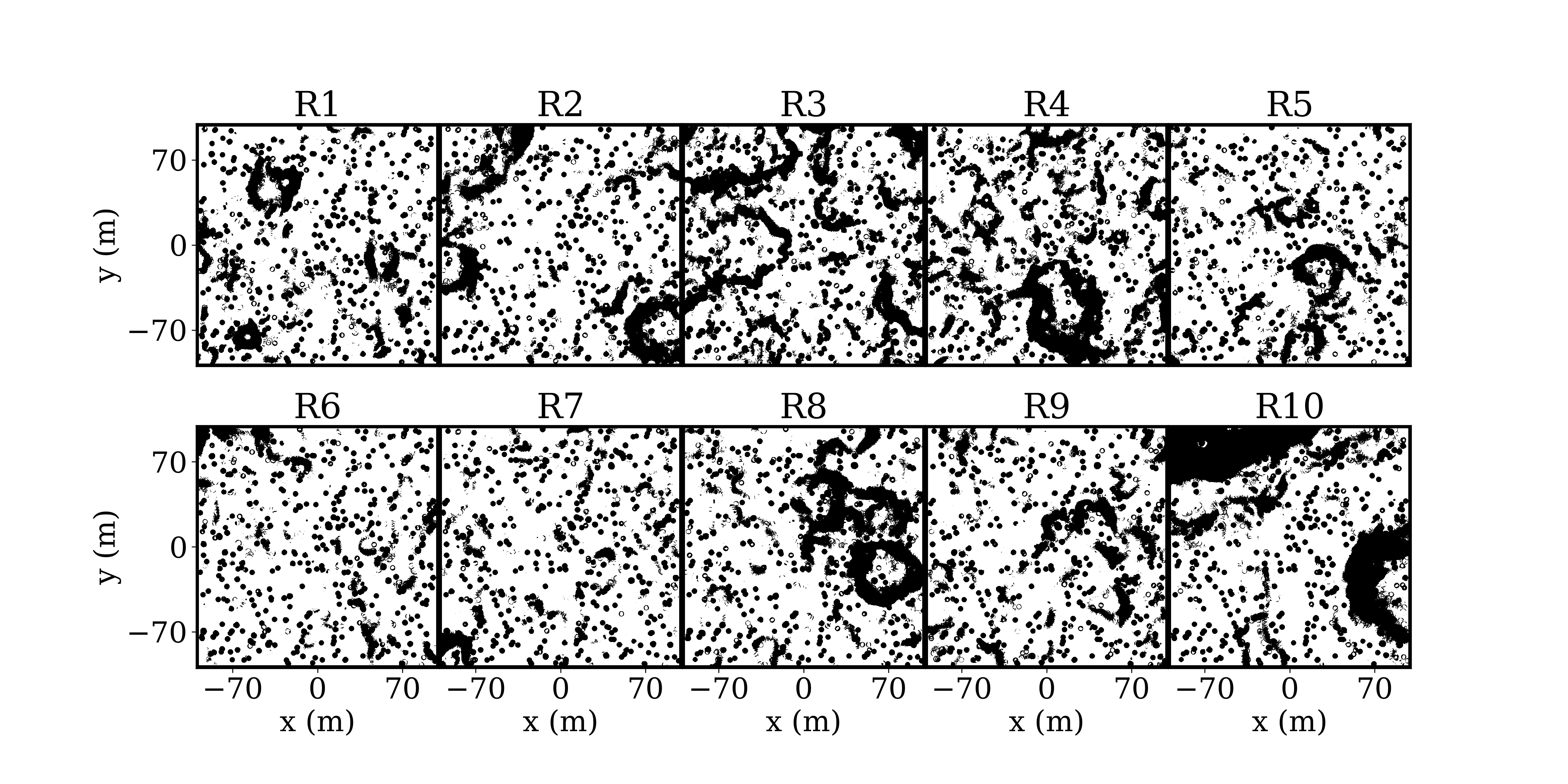}
\caption{Safety maps corresponding to the DEMs of the selected regions where white and black indicate safe and unsafe locations, respectively.}
\label{fig:real-safeties}
\end{figure}

\bibliography{references}

\end{document}